\documentclass{IEEEtran}

\newcommand{\cJ}{\mathcal J}

\newcommand{\cL}{\mathcal L}

\newcommand{\cN}{\mathcal N}
\newcommand{\cO}{\mathcal O}
\newcommand{\cP}{\mathcal P}
\newcommand{\cQ}{\mathcal Q}
\newcommand{\cR}{\mathcal R}

\newcommand{\bbE}{\mathbb{E}}

\newcommand{\bbR}{\mathbb{R}}

\newcommand{\vc}{\bm{c}}
\newcommand{\vd}{\bm{d}}

\newcommand{\vg}{\bm{g}}

\newcommand{\vu}{\textbf{u}}

\newcommand{\vx}{\textbf{x}}

\newcommand{\bmat}{\begin{bmatrix}}
\newcommand{\emat}{\end{bmatrix}}

\newcommand{\bsmat}{\begin{bsmallmatrix}}
\newcommand{\esmat}{\end{bsmallmatrix}}

\usepackage{bm}
\usepackage{commath}
\usepackage{comment}
\usepackage{mathtools}
\usepackage{algorithm}
\usepackage[noend]{algpseudocode}
\usepackage{amsmath,amssymb,amsfonts,mathrsfs}
\usepackage{amsthm}
\allowdisplaybreaks

\usepackage{float}
\usepackage{xcolor}
\usepackage{dcolumn}
\usepackage{caption}
\usepackage{makecell}
\usepackage{multirow}
\usepackage{graphicx}
\usepackage{subcaption}
\usepackage{tabularray}
\usepackage{stackengine}

\usepackage[utf8]{inputenc}
\usepackage{url}
\usepackage{cite}
\usepackage{soul}
\usepackage{bigfoot}
\usepackage{textcomp}
\usepackage{enumitem}
\usepackage{anyfontsize}
\usepackage[euler]{textgreek}
\usepackage[hidelinks]{hyperref}
\DeclareNewFootnote[para]{default}

\theoremstyle{plain}
\newtheorem{lemma}{Lemma}
\newtheorem{remark}{Remark}

\newtheorem{theorem}{Theorem}
\newtheorem{corollary}{Corollary}
\newtheorem{proposition}{Proposition}

\theoremstyle{definition}
\newtheorem{example}{Example}
\newtheorem{problem}{Problem}
\newtheorem{definition}{Definition}
\newtheorem{assumption}{Assumption}

\theoremstyle{remark}

\def\BibTeX{{\rm B\kern-.05em{\sc i\kern-.025em b}\kern-.08em
    T\kern-.1667em\lower.7ex\hbox{E}\kern-.125emX}}

\begin{document}
\title{Inverse Linear Quadratic Gaussian Games: Constrained Setting and Transferability}

\author{Kai Ren, and Maryam Kamgarpour	\thanks{Ren and Kamgarpour are with the SYCAMORE Lab, École Polytechnique Fédérale de Lausanne (EPFL), Switzerland (e-mail: {\tt\small kai.ren@epfl.ch; maryam.kamgarpour@epfl.ch}).}
\thanks{Ren and Kamgarpour's research is supported by Swiss National Foundation Grant $\#200020\_207984 \slash  1$.}} 

\maketitle
\begin{abstract}    
This work addresses finite-horizon inverse linear quadratic Gaussian games. In a constrained setting, we characterize the set of cost parameters and optimal dual values that generate a given generalized Nash equilibrium, and we propose an algorithm to compute these parameters. In an unconstrained setting, we address transferability, namely, we bound the cost value perturbation between two policies: one induced by the identified cost parameters, the other by the expert parameters, under a set of different dynamics. This cost value perturbation scales linearly with the deviations in the dynamics and the identified cost parameters. Through numerical simulations, we show that, in a constrained setting, our algorithm identifies the cost parameters and dual values that can reproduce the policy and trajectories corresponding to the observed generalized Nash equilibrium. In an unconstrained setting, we show with a traffic simulation and real-robot experiments that the identified cost parameters can be used to control sufficiently close dynamics with performance degrading linearly with the deviations in the dynamics and the identified cost parameters.
\end{abstract}





\begin{IEEEkeywords}
Game theory; Identification; Stochastic control.
\end{IEEEkeywords}

\section{Introduction}

\IEEEPARstart{A}{n} optimal control problem relies on defining a cost metric and often it is assumed to be known; However, in practice, the cost functions are unknown, hard to specify by hand and only historical data are available. Inverse optimal control aims to recover the cost functions from observed policy or trajectory data. This problem is also known as cost identification. The identified cost function enables us to reproduce desired behaviour and adapt to different dynamics or environments. In this work, we address multi-agent finite-horizon inverse linear quadratic Gaussian games. 

In multi-agent systems such as warehouse robots and autonomous vehicles, multiple agents jointly control a dynamical system over a time horizon. The decision of one agent at one time step influences the cost values of the other agents \cite{Basar1998}. Cost identification in this setting is important for inferring agents' intents from interaction data and for reconstructing policies that respect shared constraints such as collision avoidance. Hence, a relevant problem is to identify cost functions in constrained dynamic games. Once the cost parameters are identified, they are typically used to control similar systems, such as different vehicle models. This raises the question of how much performance degrades when identified cost parameters are transferred to perturbed dynamics. To this end, we also study a notion of transferability \cite{Ng2000} in the multi-agent setting. Based on the above two problems introduced, we provide a review of the most relevant work. 

Inverse optimal control in a single-agent setting has been studied in both continuous-time~\cite{Kalman1964WhenIA, Priess2015} and discrete-time~\cite{Menner2018ECC} linear quadratic systems. For a given policy, the corresponding cost matrices are characterized through the algebraic Riccati equation, and their computation is formulated as a semidefinite program. Cost identification from the state \cite{ZHANG2019IOC} or output \cite{Zhang2019ILQR} trajectories has been addressed in a discrete-time finite-horizon inverse linear quadratic control. These works formulate a convex program from Pontryagin’s Maximum Principle characterization of the optimal control to compute the cost matrices. 

In multi-agent dynamic games, the characterization and computation of the cost matrices that induce a Nash equilibrium policy have been studied in both infinite- \cite{Inga2019, Huang2022} and finite-horizon \cite{ren2025identify} linear quadratic games. Across these works, the characterizations rely on coupled Riccati equations. Beyond the linear quadratic setting, cost identification in nonlinear dynamic games from open-loop Nash equilibria \cite{Molloy2017, peters2021inferringobjectives} has been addressed via maximum likelihood estimation (MLE) in continuous- and discrete-time dynamic games, respectively. However, all the above works focus on the unconstrained settings. 

The generalized Nash equilibrium addresses the equilibrium concept under the constrained setting. Generalized Nash equilibria are typically characterized through optimality conditions, such as the Karush–Kuhn–Tucker (KKT) conditions \cite{Zhu2023, Laine2023}. Accordingly, previous works addressed cost identification by minimizing the residual of the necessary conditions associated with an open-loop generalized Nash equilibrium in constrained differential games \cite{Awasthi2020}. A KKT-constrained MLE method has also been used to identify cost parameters from noisy state and input trajectories corresponding to feedback~\cite{Li2023} or open-loop~\cite{Liu2023} generalized Nash equilibria. In parallel, Bayesian methods have been developed for cost identification under constraints, where Gaussian \cite{cleach2020lucidgamesonlineunscentedinverse} or discrete \cite{peters2023contingencygames} distributions over the cost parameters are updated using online state and input observations. While effective in practice, neither approach characterizes the cost parameters exactly: MLE methods rely on approximated gradients of the forward game \cite{Awasthi2020, Li2023, Liu2023}, and Bayesian methods rely on iterative updates from an initial guess of the distribution over the cost \cite{cleach2020lucidgamesonlineunscentedinverse, peters2023contingencygames}. An exact characterization of the cost parameters corresponding to generalized Nash equilibria remains open, even in a constrained linear quadratic games. 

Our first contribution is to  characterize the set of cost parameters and the optimal dual values corresponding to a given generalized Nash equilibrium. We also propose an algorithm to compute them. 

Once the cost parameters are identified, the next step is to apply them to control or predict the trajectories of systems whose dynamics are close to the experts' system that has been used to solve the inverse problem. The notion of transferability \cite{Ng2000} bounds the cost value perturbation between two policies: one induced by the identified cost parameters on a different system, the other by the expert cost parameters on that system.

Given error in dynamics, past works have characterized the cost value perturbation in linear quadratic control. In the single-agent linear quadratic regulator setting with given cost matrices and an $\epsilon$ dynamics error, the perturbation on the cost value scales as $\cO(\epsilon)$ ~\cite{Fiechter1997, Dean2020}. For sufficiently small $\epsilon$, an $\cO(\epsilon^2)$ perturbation bound on the cost value was derived \cite{Mania2019} by exploiting a second-order expansion of the cost around the optimal controller. However, these works focus on a single-agent infinite-horizon settings with stationary dynamics and cost parameters. Moreover, the cost value perturbation bounds account only for errors in the system dynamics. 

In practice, the identified cost parameters themselves are often perturbed, especially when they are learned from noisy state and input demonstrations \cite{ren2025identify, peters2021inferringobjectives}. Starting from perturbed cost matrices in a single-agent linear quadratic setting, the stationary Riccati solution satisfies $\|P - \hat P\| = \cO(\epsilon)$ under sufficiently small errors $\epsilon$ in both the dynamics and the cost matrices~\cite{Mania2019, Konstantinov1993}. However, these works bound the perturbation of the Riccati solution, rather than the resulting cost values. 

Our second contribution is to derive a transferability bound of the unconstrained finite-horizon LQG game under errors in the cost parameters and the dynamics. In a constrained setting, deriving the transferability bound requires a sensitivity analysis of the optimization problem for computing the optimal dual values. This is non-trivial, because the characterization of the optimal dual values admits multiple solutions and is not well-posed for the senitivity analysis. Hence, we focus on the transferability analysis in an unconstrained setting.

In summary, our contributions are as follows 
\begin{itemize}
    \item For a finite-horizon constrained LQG game, we characterize the set of cost parameters and the optimal dual variables corresponding to the generalized Nash equilibrium (see Proposition~\ref{prop:characterization}), and we develop an algorithm to compute them (see Algorithm~\ref{alg:constrainedILQG}).
    \item For the inverse LQG games in an unconstrained setting, we derive a transferability bound on the cost value when the identified cost parameters with errors are applied to perturbed multi-agent dynamics (see Theorem~\ref{theorem:transferability}).
    \item We validate our algorithm and the theoretical results in simulations and real-robot experiments (see Section~\ref{sec:caseStudy}). We show that our algorithm can identify cost parameters that recover the generalized Nash equilibrium policies in constrained LQG games. We also show that, under an unconstrained setting, the identified cost parameters are transferable to sufficiently close dynamics.
\end{itemize}
\textit{Notations:} A Gaussian distribution with mean $\mu$ and covariance matrix $\Sigma$ is denoted as $\mathcal{N}(\mu, \Sigma)$. We denote a set of consecutive integers by $[a]=\{1, 2, \dots, a\}$ and $[a]^- = \{0, 1, \dots, a-1\}$.  The weighted \(L_2\) norm of a vector \( x \) with a \( P \succeq 0\) is $\|x \|_{P} = \sqrt{x^\top P x}$. Let $I_n$ denote the identity matrix in $\bbR^{n\times n}$. 
We use \(\operatorname{diag}(A_1,\ldots,A_m)\) to denote the block-diagonal matrix with diagonal blocks \(A_1,\ldots,A_m\). The vectorization of $A \in \mathbb{R}^{m \times n}$ is denoted by $\operatorname{vec}(A) \in \mathbb{R}^{mn}$. The Kronecker product of $A \in \mathbb{R}^{m \times n}$ and $B \in \mathbb{R}^{p \times q}$ is denoted by $A \otimes B \in \mathbb{R}^{mp \times nq}$. Let $f(\cdot), g(\cdot)$ be real-valued functions, we write $f(\epsilon) = \mathcal{O}(g(\epsilon))$ if there exist constants $c > 0$ and $\epsilon_0$ such that $|f(\epsilon)| \le c\,|g(\epsilon)|$ for all $\epsilon \geq \epsilon_0$. 

\section{Inverse LQG game in constrained settings} \label{sec:constrained} 
We consider a finite-horizon discrete-time constrained linear quadratic Gaussian (LQG) game where the $N$-agent linear dynamics is \vspace{-3mm}
\begin{equation}
     x_{t+1} = A_t x_t + \sum_{i=1}^{N} B_{t}^{i} u^i_t + \omega_t, \label{eq:dynamics} 
\end{equation} 
In the above, the control input of agent \(i\) at time \(t\) is \(u_t^i\in\mathbb{R}^{n_u}\), and the concatenated state vector \(x_t=[x_t^{1\top},\ldots,x_t^{N\top}]^\top\in\mathbb{R}^{n_x}\) contains the states of all agents. The system dynamics matrices are $A_t \in \mathbb{R}^{n_x \times n_x}$ and $B_t^i \in \mathbb{R}^{n_x \times n_u}$. 
We consider a Gaussian initial state $x_0 \sim \cN(\mu_0, \Sigma_0)$ and Gaussian system noise $w_t \in \mathbb{R}^{n_x}$, with $w_t \sim \cN(0, W_t)$. We denote the expected state trajectory as $\bbE\{\vx\} = [\bbE(x_0)^\top, \; \bbE(x^{\top}_{1}), \ldots, \bbE(x^{\top}_{T})]^\top$.  \vspace{1mm}

Each agent $i$ aims to minimize their own cost function 
{\small
\begin{equation}
    \cJ^{i}(K, \alpha):= \bbE \left[\sum_{t=0}^{T-1} \left( \|x_{t+1}\|^2_{Q^{i}_{t+1}} + l^{i \top}_{t+1} x_{t+1} + \|u^i_t\|^2_{R^{i}_{t}} \right)\right],  \label{eq:LQcost}
\end{equation}}

\noindent
The cost matrices are $Q^{i}_{t} \in \mathbb{R}^{n_x \times n_x}$, $l^{i}_{t} \in \mathbb{R}^{n_x}$, and $R^{i}_{t} \in \mathbb{R}^{n_u \times n_u}$ with $Q^{i}_{t}\succeq 0, \; R^{i}_{t}\succ 0$. We consider an affine state-feedback policy for player \(i\) with \(u_t^i = -K_t^i x_t - \alpha_t^i\). Let us define \(K^i := \operatorname{diag}(K_0^i,\ldots,K_{T-1}^i,0_{n_x \times n_u}) \in \mathbb{R}^{(T+1)n_x \times (T+1)n_u}\) and \(\alpha^i := [(\alpha_0^i)^\top,\ldots,(\alpha_{T-1}^i)^\top,0_{n_u \times 1}]^\top  \in \mathbb{R}^{(T+1)n_u}\) as a compact notation of the policy parameters of player $i$ over the horizon. We denote the concatenated policy across agents by $K = (K^i, K^{-i})$ and $\alpha = (\alpha^i, \alpha^{-i})$. Each agent optimizes over its feedback parameters to minimize its cost \eqref{eq:LQcost}. 

We also consider the following affine constraints on the expected state trajectory that are shared across all agents:
\begin{equation}
    \vg(\bbE\{\vx\}) := \vc^{\top} \bbE\{\vx\} + \vd  \leq 0. \label{constraint:primalAffine}
\end{equation}
with $\vc \in \bbR^{T n_x \times O}$ and $\vd \in \bbR^O$, where $O$ is the total number of constraints. 

Overall, each agent $i$ aims to solve the following problem:
{\small
\begin{subequations} \label{problem:primal} 
\begin{alignat}{2} 
\underset{K^{i}, \alpha^{i}}{\text{min}} & \quad  \cJ^{i}(K, \alpha)\\
\text{s.t. }
& \quad x_{t+1} = A_t x_t + \sum_{i=1}^{N} B_{t}^{i} u^i_t + \omega_t, \\
& \quad \vg(\bbE\{\vx\}) := \vc^{\top} \bbE\{\vx\} + \vd  \leq 0, \label{eq:primalAffine} \\
& \quad x_0 \sim \cN(\mu_0, \Sigma_0), \quad w_t \sim \cN(0, W_t). \label{eq:Gaussian}
\end{alignat}
\end{subequations}}

To motivate the setting above, we consider the following example throughout the development of this work. 

\begin{example}
\textit{Consider a two-car interaction over a finite horizon. The state $x^i_t$ represents the 2D position of car $i$. Each car aims to follow a desired reference $\bar x_t^i$. The tracking is captured by a quadratic cost defined as}
\[\bbE\left[ \sum_{t=1}^T \|x_t^i-\bar x_t^i\|_2^2 \right] = \bbE\left[\sum_{t=1}^T x_t^{i\top} Q_t^i x_t^i + l_t^{i\top} x_t^i + c_t^i\right],\]
\textit{where $Q_t^i = I$ and $l_t^i = -2\bar x_t^i$.
A collision avoidance chance constraint is defined as
\(\Pr(\|x^1_t - x^2_t\|_2 \ge r) \ge 1-\delta, \;\; \forall t \in [T] \). It is known that this constraint can be conservatively approximated as \eqref{constraint:primalAffine} \cite{Ren2025,nair2023predictive}. This gives rise to an LQG game \eqref{problem:primal} with a linear term in the cost encoding goal-tracking information, whereas the shared affine constraint encoding collision avoidance.}
\end{example}

\begin{definition}
A generalized Nash equilibrium is a feasible joint control policy $(K^{*}, \alpha^{*})$ where no player has a feasible unilateral deviation that can reduce their cost, that is, for all players $i \in [N]$,
\begin{equation*}
\begin{aligned}
   &\cJ^i(K^{*}, \alpha^{*}) \leq \cJ^i((K^i,\alpha^i), (K^{-i*}, \alpha^{-i*})), 
\end{aligned}
\end{equation*}
for all $(K^i,\alpha^i)$ such that the expected state trajectory $\bbE\{\vx\}$ induced by $((K^i,\alpha^i), (K^{-i*}, \alpha^{-i*}))$ satisfies \eqref{constraint:primalAffine}.
\end{definition} 

The generalized Nash equilibrium can be characterized by duality \cite{Facchinei2007, Ren2025}. Given that player \(i\) has cost \eqref{eq:LQcost} and constraints \eqref{constraint:primalAffine}, we denote player $i$'s Lagrangian by \(
    \cL^i((K, \alpha), \bm{\lambda}) := \cJ^i(K, \alpha) + {\bm{\lambda}}^\top \vg(\bbE\{\vx\}).\)
Because the constraints \eqref{constraint:primalAffine} are shared among all agents, we consider a class of games in which the dual variables ${\bm{\lambda}}  \in \bbR^O$ are also shared \cite{Zhu2023, cleach2021algames}. We make the following assumption for characterizing the generalized Nash equilibrium.

\begin{assumption}[\cite{Ren2025}]\label{assumption:slater}
For all players $i\in [N]$, for any $(K^{-i}, \alpha^{-i})$, the primal and dual problems of player $i$ have zero duality gap.
\end{assumption}

Under Assumption~\ref{assumption:slater}, the necessary and sufficient optimality conditions \cite{ruszczynski2006nonlinear,Facchinei2007} for the generalized Nash equilibrium \((K^{i*},\alpha^{i*})\) and the optimal dual value \(\bm{\lambda}^*\) are
\begin{align}
    & \cL^i((K^*, \alpha^*), \bm{\lambda}^{*}) = \min_{K^i,\alpha^i} \cL^i((K^i,\alpha^i), (K^{-i*}, \alpha^{-i*}), \bm{\lambda}^{*}), \notag \\
    & \bm{\lambda^{*}} \geq 0,  \quad\;\; \vg(\bbE\{\vx\}^*) \leq 0, \quad\;\; {\bm{\lambda^{*}}}^\top  \vg (\bbE\{\vx\}^*) = 0.  \label{eq:KKTconditions}
\end{align}

Note that under linear constraints \eqref{constraint:primalAffine}, the generalized Nash equilibrium of \eqref{problem:primal} is also the Nash equilibrium of an unconstrained LQG game with a cost that includes an auxiliary linear term \(\bar l^{i}_{t+1} = l^{i}_{t+1} + \lambda_{t+1}^{*\top} c_{t+1}^\top\):
\[
\cJ^{i}(K, \alpha) = \bbE \Big[\sum_{t=0}^{T-1} \big( \|x_{t+1}\|^2_{Q^{i}_{t+1}} + \bar l^{i \top}_{t+1} x_{t+1} + \|u^i_t\|^2_{R^{i}_{t}} \big)\Big].
\]

In this work, we study the following inverse problem of identifying cost parameters and the optimal dual values from given generalized Nash equilibrium and known shared constraints (e.g., collision avoidance).

\begin{problem} \label{problem:constIdentify}
    Given the multi-agent system dynamics \eqref{eq:dynamics}, the shared constraints \eqref{constraint:primalAffine} and the observed control policy $(K^{*}, \alpha^{*})$, identify cost parameters $\{Q^{i}_{t+1}, l^{i}_{t+1}, R^{i}_{t}\}_{t \in [T]^-}^{i \in [N]}$ and the dual value $\bm{\lambda}^*$, such that the generalized Nash equilibrium of \eqref{problem:primal} coincide with $(K^{*}, \alpha^{*})$. 
\end{problem}

For the rest of this section, we first characterize the cost parameters and the optimal dual value associated with the observed generalized Nash equilibrium in section~\ref{sec:unconstrained}. We then present a computation algorithm in section~\ref{sec:CIconstraint}.

\subsection{Characterization of the cost parameters and the optimal dual value} \label{sec:unconstrained}
Our appraoch to address problem \ref{problem:constIdentify} is based on the observation that that we can first solve for the quadratic cost parameters and the auxiliary linear term,
\begin{equation*}
\bar\theta_t^i :=
\big[
\textnormal{vec}(Q_{t}^i)^\top, \;
(\bar l_{t}^i)^\top,  \;
\textnormal{vec}(R_{t-1}^{i})^\top,  \;
1
\big]^\top \in \bbR^L,
\end{equation*}
based on past work's cost characterization on unconstrained LQG games \cite{Ren2025}. 

Next, to distangle the linear cost term $l_t^i$ from the auxiliary term $\bar l_t^i = l^{i}_{t+1} + \lambda_{t+1}^{*\top} c_{t+1}^\top$, we can leverage the charcterizatoin of optimal dual values from \eqref{eq:KKTconditions}.  To illustrate the importance of separating $l_t^i$ from the auxiliary linear term $\bar l_t^i$, we return to the example.

\noindent
\textbf{Example 1 (continued):} \textit{The linear cost term is defined as \(l_{t+1}^i = -2\bar x_{t+1}^i\). Separating this term gives the reference states \(\bar x_{t+1}^i\) for each car $i \in [N]$. In practice, this helps reveal whether an observed behavior, such as a sharp turn, is caused by the agent's tracking objective or by an active collision-avoidance constraint.} 



To characterize the quadratic cost parameters and the auxiliary linear terms, we first define the following backward recursion that is necessary for the derivations. 

At $t = T$, define \(\bar\Delta_T^i=0_{n_x^2}\) and \(\bar\Omega_T^i=0_{n_x}\), and the coupled Riccati terms are \(P_T^{i*}=Q_T^i\) and \(\zeta_T^{i*}=\bar l_T^i\). \vspace{1mm} 

At \(t=T-1,\ldots,1\), define
{\footnotesize
\begin{subequations} \label{eq:backRec}
\begin{alignat}{2}
&\bar\Delta_t^i=\textnormal{vec}\!\left(F_t^\top P_{t+1}^{i*}F_t+K_t^{i*\top}R_t^iK_t^{i*}\right), \\ 
&\bar\Omega_t^i=\textnormal{vec}\!\left[F_t^\top\!\left(\zeta_{t+1}^{i*}-P_{t+1}^{i*}\sum_{j=1}^NB_t^j\alpha_t^{j*}\right)+(K_t^{i*})^\top R_t^i\alpha_t^{i*}\right]. 
\end{alignat}
\end{subequations}}

\noindent
The Riccati terms \(P_t^{i*}\) and \(\zeta_t^{i*}\) are then defined by
{\footnotesize
\begin{subequations}
\begin{align}
&\textstyle P^{i*}_t = F_t^\top P^i_{t+1} F_t + (K^{i*}_t)^\top R_t^i K^{i*}_t  + Q_{t}^{i}, \label{recatti:P} \\
&\textstyle \zeta^{i*}_t = F_t^\top \left( \zeta^{i*}_{t+1} - P^i_{t+1} \sum_{j = 1}^N B_t^j \alpha_t^{j*} \right) + (K^{i*}_t)^\top R_t^i \alpha^{i*}_t + l^i_t. \label{recatti:zeta}
\end{align}
\end{subequations}}

To distangle the linear cost term $l_t^i$ from the auxiliary term $\bar l_t^i$, we need to characterize the optimal dual variable $\bm{\lambda}^*$ based on the optimality conditions \eqref{eq:KKTconditions}. To that end, a connection between the expected state trajectory and the dual variable is needed.
For the class of affine constraints, past work \cite{Ren2025} shows that \eqref{constraint:primalAffine} depends affinely on an arbitrary dual variable: 

\begin{equation}\label{eq:linearInLambda}
\vg(\bbE\{\vx\}_{\bm{\lambda}})=\vc^\top\bbE\{\vx\}_{\bm{\lambda}}+\vd=\tilde C\bm{\lambda}+\tilde d,
\end{equation}
where \(\tilde C:=\vc^\top ZHG\in\bbR^{O\times O}\) and \(\tilde d:=\vc^\top Fx_0+\vd\in\bbR^O\). The matrices \(F, Z, H, G\) are defined in Appendix~\ref{appendix:params} and they are determined by the known dynamics, the equilibrium policy, and the Riccati parameters.

Based on the definitions above, we now present the characterization of the cost parameters and the optimal dual values.

\begin{proposition} \label{prop:characterization}
For a given generalized Nash equilibrium policy \(\{K^{i*}_{t}, \alpha^{i*}_{t}\}_{t \in [T]^-}^{i \in [N]}\), the corresponding cost parameters and optimal dual value can be characterized as follows.

\textnormal{(i)} For all $i \in [N]$ and $t \in [T]$, the quadratic cost parameters and auxiliary linear term $\bar\theta_t^i$ is characterized by 
\begin{subequations} \label{eq:characUnconstrained}
\begin{alignat}{2}
     &M_t^i\bar\theta_t^i=0_{n_u(n_x+1)\times1}, \\
     &Q_{t}^i \succeq 0, \quad R_{t-1}^i \succ 0,
\end{alignat}
\end{subequations} 
where for all $i \in [N]$ and $t \in [T]$,

{\small
\begin{align*}
    M_t^i = \begin{bmatrix}
        S_{t-1}^i & 0_{n_un_x \times n_x} & - K_{t-1}^{i* \top} \otimes I_{n_u} & S_{t-1}^i\bar{\Delta}_{t}^i \\
        0_{n_u \times n_x^2} & B_{t-1}^{i \top} &  - E_{t-1}^{i \top} \otimes I_{n_u} &  B_{t-1}^{i \top}\bar{\Omega}_{t}^i
    \end{bmatrix}.
\end{align*}}

\noindent
Here, \(S_t^i = F_t^\top \otimes (B_t^i)^\top, \;\textstyle E_{t}^i = K_{t}^{i*} F_{t}^{-1} \sum_{j=1}^N B_{t}^j \alpha^{j*}_{t} + \alpha^{i*}_{t}\). The terms \(\bar\Delta_t^i\) and \(\bar\Omega_t^i\) are determined by the backward recursion defined in \eqref{eq:backRec}.

\textnormal{(ii)} The optimal dual variable \(\bm{\lambda}^*\) satisfies
\begin{subequations}  \label{eq:dualCharacterization}
    \begin{align}
	    &{\bm{\lambda^{*}}}^\top (\tilde{C} \bm{\lambda^*} + \tilde d) = 0, \\
	    &\bm{\lambda^*} \geq 0, \quad \text{and} \quad \tilde{C} \bm{\lambda^*} + \tilde d \leq 0. 
\end{align}
\end{subequations}
\end{proposition}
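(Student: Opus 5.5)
The plan is to reduce the constrained problem to an unconstrained one and then split the argument into the two parts of the statement. By Assumption~\ref{assumption:slater}, the conditions \eqref{eq:KKTconditions} are necessary and sufficient for $(K^*,\alpha^*,\bm\lambda^*)$. The first line of \eqref{eq:KKTconditions} says that, for fixed $\bm\lambda^*$, $(K^{i*},\alpha^{i*})$ minimizes $\cL^i$ given $(K^{-i*},\alpha^{-i*})$. Since $\vg$ is affine in $\bbE\{\vx\}$, the term $\bm\lambda^{*\top}\vg$ contributes only the constant $\bm\lambda^{*\top}\vd$ plus a linear term in the states. So $\cL^i$ equals, up to a constant, the unconstrained LQG cost with linear term $\bar l^i_{t+1}=l^i_{t+1}+\lambda^{*\top}_{t+1}c^\top_{t+1}$. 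Hence $(K^*,\alpha^*)$ is a generalized Nash equilibrium with dual $\bm\lambda^*$ if and only if two things hold: it is a Nash equilibrium of the unconstrained LQG game with parameters $\bar\theta^i_t$, and the complementarity, primal feasibility and dual feasibility conditions in \eqref{eq:KKTconditions} hold.

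\textbf{Part (i).} For the unconstrained game I would use the characterization of \cite{Ren2025}, or rederive it by backward dynamic programming. The value function of player $i$ is quadratic, $x^\top P^{i*}_t x+2\zeta^{i*\top}_t x+\text{const}$. Write the stage-$(t-1)$ first-order condition of player $i$ with respect to $u^i_{t-1}$, with the other players' feedback fixed and $R^i_{t-1}\succ0$. This gives the two matrix equations
\[
R^i_{t-1}K^{i*}_{t-1}=B^{i\top}_{t-1}(Q^i_t+\tilde P^i_{t})F_{t-1}, \qquad R^i_{t-1}\alpha^{i*}_{t-1}=B^{i\top}_{t-1}(\cdot),
\]
where $F_{t-1}=A_{t-1}-\sum_j B^j_{t-1}K^{j*}_{t-1}$ is the closed-loop matrix. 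The next steps are:
\begin{itemize}
\item Split $P^{i*}_t=Q^i_t+(F_t^\top P^{i*}_{t+1}F_t+K^{i*\top}_tR^i_tK^{i*}_t)$ and $\zeta^{i*}_t=\bar l^i_t+(\cdots)$ using \eqref{recatti:P}--\eqref{recatti:zeta}. The bracketed parts are exactly the quantities whose vectorizations are $\bar\Delta^i_t$ and $\bar\Omega^i_t$ in \eqref{eq:backRec}.
\item Vectorize with $\mathrm{vec}(AXB)=(B^\top\otimes A)\mathrm{vec}(X)$. The $K$-equation becomes $S^i_{t-1}\mathrm{vec}(Q^i_t)-(K^{i*\top}_{t-1}\otimes I)\mathrm{vec}(R^i_{t-1})+S^i_{t-1}\bar\Delta^i_t=0$.
\item In the affine equation, substitute the $K$-relation to eliminate the quadratic part. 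The combination $K^{i*}_{t-1}F_{t-1}^{-1}\sum_jB^j\alpha^{j*}+\alpha^{i*}$ then appears, which yields $E^i_{t-1}$ and the row $[0,\;B^{i\top}_{t-1},\;-E^{i\top}_{t-1}\otimes I,\;B^{i\top}_{t-1}\bar\Omega^i_t]$.
\end{itemize}
Stacking both rows gives $M^i_t\bar\theta^i_t=0$. The sign conditions $Q\succeq0$ and $R\succ0$ are imposed for well-posedness, and with them sufficiency follows because each player's stage problem is strictly convex.

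\textbf{Part (ii).} Complementarity, primal feasibility and dual feasibility in \eqref{eq:KKTconditions} are stated in terms of $\vg(\bbE\{\vx\}^*)$. By \eqref{eq:linearInLambda}, the expected trajectory generated by the unconstrained equilibrium with multiplier $\bm\lambda$ satisfies $\vg=\tilde C\bm\lambda+\tilde d$. Evaluating this at $\bm\lambda^*$ turns those conditions directly into \eqref{eq:dualCharacterization}.

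\textbf{Main obstacle.} I expect the hardest step to be the algebra for the affine (second) row of $M^i_t$. Making the recursion depend on the unknowns only through $Q^i_t$, $\bar l^i_t$ and $R^i_{t-1}$ requires eliminating the unknown $P^{i*}_t$ via the $K$-equation, and this needs $F_{t-1}$ to be invertible. I would state that invertibility as implicit in the definition of $E^i_t$.
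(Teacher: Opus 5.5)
Your proposal is correct and follows the paper's proof. Part (ii) is exactly the paper's argument: substitute $\vg(\bbE\{\vx\}^*)=\tilde C\bm\lambda^*+\tilde d$ from \eqref{eq:linearInLambda} into the complementarity and feasibility conditions of \eqref{eq:KKTconditions}. For part (i) the paper simply cites the null-space characterization of the unconstrained LQG game, \cite[Proposition 1]{ren2025identify}, applied with the auxiliary linear term $\bar l^i_t$; your rederivation supplies the content of that result and reproduces both rows of $M^i_t$. You vectorize the stage first-order conditions and eliminate $P^{i*}_t$ from the affine condition via the $K$-relation, correctly noting that this needs $F_{t-1}$ invertible.
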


\textit{Proof.} \textnormal{(i)} The result follows directly from  the null-space characterization of the unconstrained LQG game \cite[Proposition 1]{ren2025identify}.

\textnormal{(ii)} 
Substituting \(\vg(\bbE\{\vx\}^*)=\tilde C\bm{\lambda}^*+\tilde d\) from \eqref{eq:linearInLambda} into \eqref{eq:KKTconditions} gives the characterization in \textnormal{(ii)}. \(\hfill \square\)

\subsection{Computation of the cost parameters and the optimal dual values} \label{sec:CIconstraint}
We aim to use Proposition~\ref{prop:characterization} to compute cost parameters and an optimal dual value consistent with the observed equilibrium policy. Both the null-space characterization \eqref{eq:characUnconstrained} and the dual variable characterization \eqref{eq:dualCharacterization} may admit multiple solutions. To select one non-trivial solution, we propose a quadratic program to compute the cost parameters and a linear complementarity problem to compute the optimal dual value.

Based on proposition~\ref{prop:characterization}\textnormal{-(i)}, we first compute the quadratic cost parameters that satisfy \(R_t^i \succ 0\), \(Q_t^i \succeq 0\) and auxiliary linear term \(\bar l_{t}^i\) by solving
\begin{subequations} \label{eq:QP}
    \begin{alignat}{2}
        \min_{\bar\theta^i_t} & \;\; r_t^i = \|M_t^i \bar\theta^i_t\|_2^2, \\
        \text{s.t. } & \;\; D_{1} \bar\theta_t^i \ge \tau, \quad \underline{\tau} \le D_{2} \bar\theta_t^i \le \bar{\tau}, \label{constraint:psd}
    \end{alignat}
\end{subequations}
where \(D_1 \in \mathbb{R}^{m_1 \times L}\) and \(D_2\in \mathbb{R}^{m_2 \times L}\) extract the diagonal and off-diagonal entries of \(R_{t}^i\) and \(Q_{t+1}^i\), respectively. In \eqref{constraint:psd}, \(\underline{\tau}\) is a small positive value, and \(\tau, \bar{\tau}\) are chosen to enforce \(R_t^i\), \(Q_t^i\) to be diagonally dominant and hence \(R_t^i \succ 0\), \(Q_t^i \succeq 0\) by Gershgorin circle theorem.

Based on proposition~\ref{prop:characterization}\textnormal{-(ii)}, we then compute the optimal dual variable \(\bm{\lambda}^*\) by solving the following linear complementarity problem:
\begin{subequations} \label{eq:LCPdual}
    \begin{alignat}{2}
        & \min_{\bm{\lambda}} \; \;\; \bm{\lambda^{\top}} (-\tilde{C} \bm{\lambda} - \tilde d), \label{eq:dualResidual}\\
        & \text{ s.t.} \quad\; \bm{\lambda} \geq 0, \quad \text{and} \quad \tilde{C} \bm{\lambda} + \tilde d \leq 0. \label{const:linearLCP}
    \end{alignat}
\end{subequations}

Solving \eqref{eq:LCPdual} yields \(\bm{\lambda^{*}}\), which then allows us to recover the linear cost terms \(l^{i}_{t+1}\) from \(\bar l^{i\top}_{t+1} = l^{i\top}_{t+1} + \lambda_{t+1}^{*\top} c_{t+1}^\top\), since the shared constraints are known. The constrained cost identification procedure is summarized in Algorithm~\ref{alg:constrainedILQG}.

\begin{algorithm}[t]
\begin{minipage}{0.5\textwidth}
\caption{Cost identification in LQG games}
\label{alg:constrainedILQG}
\begin{algorithmic}[1]
\State \textbf{Initialize:} (generalized) Nash equilibrium: $(K^{*}, \alpha^{*})$; \texttt{setting} $\in \{\texttt{constrained}, \texttt{unconstrained}\}$

\For{$t = T-1, ..., 1, 0$} \vspace{1mm}

\State Solve \eqref{eq:QP} $\rightarrow \{Q^{i}_{t+1}, \bar l_{t+1}^{i}, R_t^{i}\}^{i \in [N]}_{t \in [T]^-}$; \label{algStep:solveNE} \vspace{1mm}
\EndFor

\If{\texttt{setting} == \texttt{constrained}} \vspace{1mm}
    \State  Solve \eqref{eq:LCPdual} $\rightarrow \{l_{t+1}^{i}\}^{i \in [N]}_{t \in [T]^-}, \bm{\lambda}^*$; \label{step:dualID} \vspace{1mm}
    \State \Return $\{Q^{i}_{t+1}, l_{t+1}^{i}, R_t^{i}\}^{i \in [N]}_{t \in [T]^-}$, $\bm{\lambda}^*$
\Else \vspace{1mm}
    \State \Return $\{Q^{i}_{t+1}, \bar l_{t+1}^{i}, R_t^{i}\}^{i \in [N]}_{t \in [T]^-}$
\EndIf
\end{algorithmic}
\end{minipage}
\end{algorithm}

\begin{remark} \label{remark:constrained}
Problem \eqref{eq:LCPdual} is a linear complementarity problem\cite{Leonard2007}, as it minimizes the complementarity residual \eqref{eq:dualResidual} against linear inequality constraints \eqref{const:linearLCP}. Conditions for the existence and uniqueness of a solution to \eqref{eq:LCPdual} have been studied \cite{Leonard2007, Li2016}. A sufficient condition is that $-\tilde{C}$ be a $P$-matrix, i.e., a square matrix whose principal minors are all positive; this class includes positive-definite matrices. Verifying whether $-\tilde{C}$ is a $P$-matrix is challenging as it is generally neither symmetric nor sign-definite. Standard sufficient conditions for the $P$-matrix property do not readily apply. Based on our numerical experiments in section \ref{sec:caseStudy}, although $-\tilde{C}$ is often not positive definite, \eqref{eq:LCPdual} consistently admits a solution. A characterization of existence and uniqueness conditions is left for future work. 
\end{remark}




\section{Transferability of the identified cost parameters} \label{sec:transferability}
In general, the identified cost parameters are used to infer  policies for different dynamical systems (e.g., a different car model). Since cost identification depends on the dynamics, a natural question is whether the cost parameters learned for a given dynamics result in optimal performance when it is transferred to a new dynamics. This problem is referred to as transferability \cite{Ng2000, Fiechter1997}. For notational convenience, throughout this section we use $A:=\{A_t\}_{t\in[T]^-}$,  $B:=\{B_t^i\}_{t\in[T]^-}^{i\in[N]}$, $\theta:=\{\theta_t^i\}_{t\in[T]^-}^{i\in[N]}$ to denote the collections of the system dynamics and cost parameters over all time steps and agents.

\begin{definition} \label{def:nu-transferable}
The identified cost parameters $\widehat{\theta}^{id}$ are \emph{$\nu$-transferable} to a set of dynamics $\mathcal{P}$ if, for all players $i\in[N]$ and all dynamics $(A,B)\in\mathcal{P}$,
{\small
\begin{align*}
\mathcal{J}^i\!\left(\theta^E, (K^*, \alpha^*)_{(A,B,\theta^E)}\right)
-
\mathcal{J}^i\!\left(\theta^E, (K^*, \alpha^*)_{(A,B,\widehat{\theta}^{id})}\right)
\le \nu.
\end{align*}}

\noindent
where $\theta^E$ denotes the expert cost parameters that yield the observed Nash equilibrium under the true dynamics, and $(K^*, \alpha^*)_{(A,B,\theta)}$ denotes the Nash equilibrium induced by cost parameters $\theta$ under dynamics $(A,B)$.
\end{definition}

 Our goal in this section is to address the transferability for the finite-horizon LQG game.

\begin{problem}\label{problem:transferability}
Given the identified cost parameters $\widehat{\theta}^{id}$ from Algorithm~\ref{alg:constrainedILQG}, determine a set of dynamics $\mathcal{P}$ and a bound $\nu$ such that $\widehat{\theta}^{id}$ is $\nu$-transferable to $\mathcal{P}$.
\end{problem}

By Definition~\ref{def:nu-transferable}, addressing Problem~\ref{problem:transferability} requires bounding the perturbation of the cost \(\mathcal{J}^i\), which depends directly on the expected state and input trajectories. As shown in section~\ref{sec:unconstrained}, the expected state trajectory can be written as \(\bbE\{\vx\}_{\bm{\lambda}}=Fx_0+ZH(G \bm{\lambda})\). Hence, in the constrained LQG game, bounding the transferability gap requires bounding the dual estimation error \(\|\widehat{\bm{\lambda}}-\bm{\lambda}^*\|\) induced by \eqref{eq:LCPdual}. As discussed in Remark~\ref{remark:constrained}, bounding \(\|\widehat{\bm{\lambda}}-\bm{\lambda}^*\|\) is challenging, as it would require a sensitivity analysis of the linear complementarity problem \eqref{eq:LCPdual}. However, the matrix \(\tilde C\) in \eqref{eq:LCPdual} is not guaranteed to satisfy structural conditions, such as positive definiteness, that would allow for a sensitivity analysis \cite{Mangasarian1987}.

For this reason, we focus on analyzing the transferability in the unconstrained LQG game. Related questions have been studied in a single-agent infinite-horizon setting \cite{Fiechter1997, Dean2020, Mania2019}. In contrast, our setting is multi-agent, finite-horizon, time-varying. This setting is challenging because the errors propagate through the finite-horizon backward coupled Riccati recursions. 

 To obtain our transferability bound, we first bound the policy perturbation due to cost parameter error and dynamics error, respectively, in section~\ref{sec:policyErr}. We then propagate the policy perturbation to obtain the transferability bound in section~\ref{sec:transBound}.

\subsection{Policy perturbation under cost parameters and dynamic errors} \label{sec:policyErr}

In this section, we analyze the policy perturbations, as shown in Fig.~\ref{fig:policy-perturbation}, under errors on both the identified cost parameters and dynamics. 

In this section, we consider two dynamics: the expert's $(A^E,B^E)$, and the dynamics $(A',B')$ to which the identified cost is transferred. We also distinguish three cost parameters. The expert cost parameters $\theta^E$ induce the observed Nash policy under $(A^E,B^E)$. Since the optimization problem \eqref{eq:QP} used to compute the cost parameters has multiple solutions, let $\theta^{id}$ denote any identified cost parameters that constitute an optimal solution to \eqref{eq:QP} under $(A^E, B^E)$. As shown in Fig.~\ref{fig:policy-perturbation}, both $\theta^E$ and $\theta^{id}$ induce the same expert policy at $(A^E, B^E)$. Finally, since the cost identification itself is subject to error, let $\widehat{\theta}^{id}$ denote the perturbed identified parameters. We assume there exists an arbitrarily small \(\epsilon > 0\), such that
\begin{equation}\label{eq:costParamBound}
    \max_{i \in [N],\, t \in [T]^-}
    \left\{
    \| \widehat \theta_{t}^{id,i} - \theta_{t}^{id,i}\|
    \right\}
    \le \epsilon.
\end{equation} 
When the Nash policy used in Algorithm~\ref{alg:constrainedILQG} is estimated from sufficiently many trajectory samples, \eqref{eq:costParamBound} holds with high probability \cite[Theorem 1]{ren2025identify}.

In this subsection, we bound the deviation between the Nash policy induced by $\theta^E$ and by $\widehat{\theta}^{id}$ under $(A',B')$, that is the purple line in Fig.~\ref{fig:policy-perturbation}. This policy perturbation bound will be used to establish the transferability bound in the next subsection. To do so, we first bound the policy perturbation due to the cost-parameter error (orange dashed line) in Lemma~\ref{lemma:policyBound}. Next, Lemma~\ref{lemma:policyBoundDyn} bounds the policy perturbation due to the dynamics error (red dash-dotted lines). Finally, Proposition~\ref{remark:fullPolicyBound} obtains the desired policy perturbation bound via the triangle inequality.

First, we analyze the policy perturbation bound under small errors \eqref{eq:costParamBound} on the cost parameters.

\begin{figure}
    \centering
    \includegraphics[width=0.88\linewidth]{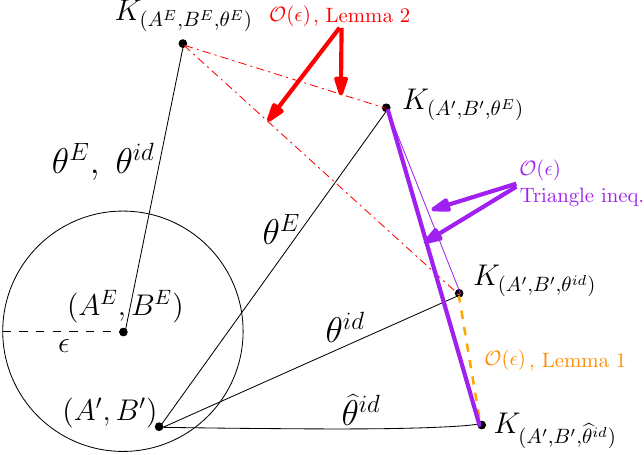}
    \caption{\textit{ The red dash-dotted line shows the policy perturbation due to the dynamics error alone, from $(A^E,B^E)$ to $(A',B')$ under \(\theta^E\). The orange dashed line shows the policy perturbation due to the cost-parameter error alone, from $\theta^{id}$ to $\widehat{\theta}^{id}$ under $(A',B')$. The purple line shows the policy perturbation from $\theta^E$ to $\widehat{\theta}^{id}$ under $(A',B')$.}}
    \label{fig:policy-perturbation}
\end{figure} 

\begin{lemma} \label{lemma:policyBound}
    For arbitrary multi-agent dynamics $(A,B)$ and for all $t \in [T]^-$, the multi-agent policies induced by $\widehat{\theta}^{id}$ and $\theta^{id}$ satisfy
\begin{align*}
& \|(K_t)_{(A,B,\widehat\theta^{id})} - (K_t)_{(A,B,\theta^{id})}\|
=
\cO\!\Big(
N^{(T-t)^2}\;
\xi^{\,2(T-t+1)^2}\;
\epsilon
\Big), \\
& \|(\alpha_t)_{(A,B,\widehat\theta^{id})} - (\alpha_t)_{(A,B,\theta^{id})}\|
=
\cO\!\Big(
N^{\frac{(T-t)^2}{2}}\;
\xi^{\,2(T-t+1)^2}\;
\epsilon
\Big),
\end{align*}
where $\xi > 1$ uniformly bounds the dynamics, cost and policy parameters, the initial-state mean and covariance, and the system-noise covariances:
{\small
\begin{equation} \label{eq:uniformNormB}
    \max_{\substack{i \in [N]\\ t \in [T]^-}}
    \left\{
    \begin{aligned}
        &\|A_t\|,\; \|B_t^i\|,\; \|W_t\|,\\
        &\|Q_{t+1}^i\|,\; \|l_{t+1}^i\|,\; \|R_t^i\|,\\
        &\|K_t\|,\; \|\alpha_t\|,\; \|\mu_0\|,\; \|\Sigma_0\|
    \end{aligned}
    \right\}\leq\xi.
\end{equation}}
\end{lemma}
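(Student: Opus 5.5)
The plan is a backward induction over the coupled Riccati recursion that generates the Nash policy, tracking two error sequences. Fix $(A,B)$ and write the feedback Nash equilibrium at time $t$ as the solution of the coupled linear system obtained from the stage-wise first-order conditions. For all players $i$,
\begin{equation*}
(R_t^i + B_t^{i\top}P_{t+1}^iB_t^i)K_t^i + B_t^{i\top}P_{t+1}^i\textstyle\sum_{j\neq i}B_t^jK_t^j = B_t^{i\top}P_{t+1}^iA_t ,
\end{equation*}
together with the analogous system for $\alpha_t$, which has the same coefficient matrix and right-hand side driven by $\zeta_{t+1}^i$ and $P_{t+1}^i$. Stacking over players gives $\Phi_t K_t = \Psi_t$ and $\Phi_t\alpha_t = \psi_t$. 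The matrix $\Phi_t$ has $N\times N$ blocks and depends on $(R_t,P_{t+1},B_t)$.

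We define the errors $\delta P_t:=\max_i\|\widehat P_t^i-P_t^i\|$ and $\delta\zeta_t:=\max_i\|\widehat\zeta_t^i-\zeta_t^i\|$, together with $\delta K_t$ and $\delta\alpha_t$. Here hats denote quantities generated by $\widehat\theta^{id}$ and unhatted ones those generated by $\theta^{id}$. The induction runs as follows.

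\textbf{Base step.} At $t=T$ we have $P_T=Q_T$ and $\zeta_T=l_T$, so $\delta P_T,\delta\zeta_T\le\epsilon$ by \eqref{eq:costParamBound}.

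\textbf{Policy step.} Write $\widehat\Phi_t-\Phi_t$ and $\widehat\Psi_t-\Psi_t$ in terms of $\delta P_{t+1}$ and $\epsilon$. Each block has norm at most $\xi^2\delta P_{t+1}+\epsilon$, and the stacked matrix at most $N$ times that. The perturbation identity
\begin{equation*}
\widehat K_t-K_t=\widehat\Phi_t^{-1}\big[(\widehat\Psi_t-\Psi_t)-(\widehat\Phi_t-\Phi_t)K_t\big]
\end{equation*}
then gives $\delta K_t\lesssim \|\widehat\Phi_t^{-1}\|\,N\xi^{3}(\delta P_{t+1}+\epsilon)$. The same argument bounds $\delta\alpha_t$ by $\delta P_{t+1}$, $\delta\zeta_{t+1}$, $\delta K_t$ and $\epsilon$.

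\textbf{Riccati step.} Propagate through \eqref{recatti:P}--\eqref{recatti:zeta}, with $F_t=A_t-\sum_j B_t^jK_t^j$. The differences of products expand by adding and subtracting terms. Using $\|F_t\|\le \xi+N\xi^2$ and the uniform bounds \eqref{eq:uniformNormB}, this yields $\delta P_t\le c\,N^{a}\xi^{b}(\delta P_{t+1}+\delta K_t+\epsilon)$ and a similar bound for $\delta\zeta_t$.

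Unrolling the resulting linear recursion from $T$ down to $t$ gives a bound of the form $\epsilon$ times a product of $T-t$ factors. Each factor is polynomial in $N$ and $\xi$, but its exponent itself grows with the remaining horizon. This growth happens because the bound on $\|P_{t+1}\|$ used inside the factor is itself of order $\xi^{2(T-t)}$, since the nominal Riccati iterates grow by a factor of about $\xi^2$ per step. Summing exponents $\sum_{s=t}^{T}2(T-s+1)$ therefore produces the quadratic exponents $\xi^{2(T-t+1)^2}$ and $N^{(T-t)^2}$. The $\alpha$ bound picks up only $\sqrt N$ per level, which accounts for the $N^{(T-t)^2/2}$ factor. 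If this is done in Frobenius norm, stacking $N$ vectors costs $\sqrt N$ rather than $N$.

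The main obstacle is controlling $\|\widehat\Phi_t^{-1}\|$ uniformly, since invertibility of the coupled system is not automatic in games. I would first assume, as is implicit in the statement's $\cO$ and in the well-posedness of the Nash equilibrium, that $\Phi_t$ is invertible with $\|\Phi_t^{-1}\|$ absorbed into the constant. For $\epsilon$ small enough, a Neumann-series argument then gives $\|\widehat\Phi_t^{-1}\|\le 2\|\Phi_t^{-1}\|$. This is precisely where the smallness of $\epsilon$ in \eqref{eq:costParamBound} is used. The secondary difficulty is bookkeeping the exponents so that they come out as exactly the stated quadratic powers. I would first establish an a priori bound $\|P_s^i\|,\|\zeta_s^i\|\le (N\xi^2)^{T-s+1}$ by induction, and then carry these bounds into the perturbation recursion.
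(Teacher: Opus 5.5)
Your plan follows the paper's proof in Appendix~\ref{appendix:lemma1Proof}. You stack the stage-wise first-order conditions, bound $\delta K_t,\delta\alpha_t$ by $\delta P_{t+1},\delta\zeta_{t+1},\epsilon$, and push the differences through \eqref{recatti:P}--\eqref{recatti:zeta}. The quadratic exponents then arise because the nominal $\|P_{t+1}\|$ multiplies $\|\Delta F_t\|$ at every step. Your identity $\widehat K_t-K_t=\widehat\Phi_t^{-1}[(\widehat\Psi_t-\Psi_t)-(\widehat\Phi_t-\Phi_t)K_t]$ is cleaner than the paper's $\Delta(\Phi_t^{-1})D_t+\widehat\Phi_t^{-1}\Delta D_t$. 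It uses $\|K_t\|\le\xi$ instead of $\|D_t\|=\cO(\sqrt N\xi^2\|P_{t+1}\|)$, so it actually justifies the absence of a $\|P_{t+1}\|$ factor in \eqref{eq:DeltaK-dominant}. Your Neumann-series control of $\widehat\Phi_t^{-1}$ also makes explicit what the paper takes from \cite[Lemma 4]{ren2025identify}.

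The step that fails as written is your a priori bound $\|P_s^i\|,\|\zeta_s^i\|\le(N\xi^2)^{T-s+1}$, together with the claim that the nominal iterates grow by about $\xi^2$ per step. From \eqref{recatti:P}, $\|P_s^i\|\le\|F_s\|^2\|P_{s+1}^i\|+\xi^3+\xi$, so your induction closes only if $\|F_s\|^2\lesssim N\xi^2$. But \eqref{eq:uniformNormB} gives only $\|F_s\|\le\xi+\|[B_s^1\,\cdots\,B_s^N]\|\,\|K_s\|=\cO(\sqrt N\xi^2)$, and your own $\xi+N\xi^2$ is weaker still. That means growth of order $N\xi^4$ per step. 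The sequence $\zeta_s$, driven by $F_s^\top P_{s+1}^i\sum_jB_s^j\alpha_s^j$, grows at the same rate. The single-agent comparison with zero control would cap growth at $\|A_s\|^2$, but it does not help here: player $i$ deviating to $K_s^i=0$ still faces $A_s-\sum_{j\neq i}B_s^jK_s^j$. Your exponent count reflects the same slip: $\sum_{s=t}^{T}2(T-s+1)=(T-t+1)(T-t+2)$, about half of $2(T-t+1)^2$. The factor $2$ in the statement comes precisely from $\xi^4$-per-step growth.

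The repair is the paper's \eqref{eq:Pt-norm-bound} and \eqref{eq:zetat-norm-bound}, namely $\|P_s\|,\|\zeta_s\|=\cO(\xi(N\xi^4)^{T-s})$. Then $\delta P_t\lesssim\mathrm{poly}(N,\xi)\,(N\xi^4)^{T-t-1}(\delta P_{t+1}+\epsilon)$ unrolls to $\epsilon\,\mathrm{poly}(N,\xi)^{T-t}(N\xi^4)^{(T-t)(T-t-1)/2}$. Its leading $\xi$-exponent $2(T-t)^2$ is what the paper conservatively rounds up to $2(T-t+1)^2$. Also bound $\delta\alpha_t$ by the analogous identity $\widehat\alpha_t-\alpha_t=\widehat\Phi_t^{-1}[(\widehat\psi_t-\psi_t)-(\widehat\Phi_t-\Phi_t)\alpha_t]$. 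It needs only $\delta P_{t+1},\delta\zeta_{t+1},\epsilon$ and $\|\alpha_t\|\le\xi$, not $\delta K_t$. In the paper, both policy errors in fact carry $N^{(T-t)^2/2}$ before the $K_t$ bound is loosened. So the different $N$-powers in the statement are not a $\sqrt N$-per-level effect peculiar to $\alpha_t$.
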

Lemma~\ref{lemma:policyBound} shows that if the identified cost parameters incur an \(\epsilon\) error, then the Nash equilibrium policy induced from the identified cost parameters $\widehat{\theta}^{id}$ incurs an $\cO(\epsilon)$ error. The constant term grows exponentially with $T-t$ due to the multiplicative error propagation of the Ricatti solution $P_t$ and $\zeta_t$ over the finite horizon (e.g., $\|\Delta P_{t}\|$ depends on \(\|P_{t+1}\|\,\|\Delta P_{t+1}\|\)). 

Here, we provide a sketch of the proof. The complete proof is detailed in Appendix~\ref{appendix:lemma1Proof}.

\begin{proof}[Proof sketch]
We use \(\Delta(.)\) to denote the difference between parameters induced by $\widehat{\theta}^{id}$ and $\theta^{id}$. The proof starts by expressing the policy perturbations in terms of the Riccati solution perturbations based on \eqref{eq:Keq} an \eqref{eq:alphaeq}:
\begin{align*}
\|\Delta K_t\|
&= 
\cO\!\Big(
\sqrt{N}\xi^5\,\epsilon + \sqrt{N}\xi^3\,\|\Delta P_{t+1}\|
\Big),\\
\|\Delta\alpha_t\|
&=
\cO\!\Big(
\sqrt{N}\xi^4\,\epsilon
+
\sqrt{N}\xi^2\,\|\Delta P_{t+1}\|
+
\xi^2\,\|\Delta\zeta_{t+1}\|
\Big).
\end{align*}
Thus, the key step is to bound the backward Riccati perturbations \(\|\Delta P_{t+1}\|\) and \(\|\Delta \zeta_{t+1}\|\). 

The backward recursion starts from the terminal conditions
\(
\|P_T\|=\|Q_T\|\le \xi,\;
\|\zeta_T\|=\|l_T\|\le \xi,\;
\|\Delta P_T\|=\|\Delta Q_T\|\le \epsilon,\;
\|\Delta \zeta_T\|=\|\Delta l_T\|\le \epsilon.
\)
We first bound the nominal Riccati terms $\|P_t\|,\|\zeta_t\| $, since they enter the perturbation recursion multiplicatively. From \eqref{recatti:P} and \eqref{recatti:zeta},
the bound unrolls to
\(
\|P_t\|=\cO\!\big(\xi (N\xi^4)^{T-t}\big)
\) and 
\(
\|\zeta_t\|
=
\cO\!\big(
\xi\,(N\xi^4)^{T-t}
\big).
\)

We then propagate the perturbations bounds $\|\Delta P_t\|, \|\Delta\zeta_t\|$ backward. Substituting the bound on \(\|P_{t+1}\|\) and \(\|\zeta_{t+1}\|\), and unrolling from \(t=T\) gives
\[
\|\Delta P_t\|
=
\cO\!\Big(
\epsilon\,
(N^{3/2}\xi^9)^{T-t}\,
(N\xi^4)^{\frac{(T-t-1)(T-t)}{2}}
\Big),
\]



\[
\|\Delta \zeta_t\|
=
\cO\!\Big(
\epsilon\,
(N\xi^7)^{T-t}\,
(N\xi^4)^{\frac{(T-t-1)(T-t)}{2}}
\Big).
\]
Substituting the resulting bounds for \(\|\Delta P_{t+1}\|\) and \(\|\Delta\zeta_{t+1}\|\) back into the displays for \(\|\Delta K_t\|\) and \(\|\Delta \alpha_t\|\) gives the policy perturbation bounds in Lemma~\ref{lemma:policyBound}. 
\end{proof}

We now establish an analogous bound for policy perturbation under fixed cost parameters and dynamic errors (red dotted dashed lines in Figure~\ref{fig:policy-perturbation}).

\begin{lemma} \label{lemma:policyBoundDyn}
    Let the cost parameters $\theta$ be fixed. For any $(A',B')\in\mathcal{P}$ satisfying
\[
\max_{t\in[T]^-,\,i\in[N]}\{\|A'_t-A_t^E\|,\|B_t^{i\prime}-B_t^{i,E}\|\}\le \epsilon,
\]
the Nash equilibrium policies induced by $\theta$ on $(A',B')$ and by $\theta$ on the true dynamics $(A^E,B^E)$ satisfy
{\small
\begin{align*}
&\|(K_t)_{(A',B',\theta)} - (K_t)_{(A^E,B^E,\theta)}\|
 =
\cO\!\Big(
N^{\frac{(T-t+2)^2}{2}}\,
\xi^{2(T-t+2)^2}
\epsilon
\Big), \\
&\|(\alpha_t)_{(A',B',\theta)} - (\alpha_t)_{(A^E,B^E,\theta)}\|
 =
\cO\!\Big(
N^{\frac{(T-t+2)^2}{2}}\,
\xi^{2(T-t+2)^2}
\epsilon
\Big).
\end{align*}}
\end{lemma}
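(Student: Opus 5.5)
We mirror the proof of Lemma~\ref{lemma:policyBound}, but now the perturbation enters through the dynamics rather than the cost. Throughout, $\Delta(\cdot)$ denotes the difference between a quantity computed under $(A',B',\theta)$ and under $(A^E,B^E,\theta)$. Note that $\Delta Q_t^i=\Delta R_t^i=\Delta l_t^i=0$, while $\|\Delta A_t\|,\|\Delta B_t^i\|\le\epsilon$. Both dynamics are bounded by $\xi$ in \eqref{eq:uniformNormB}, possibly after enlarging $\xi$ by $\epsilon\le 1$.

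\textbf{Step 1: write the equilibrium as a perturbed linear system.} At each $t$, the Nash policy solves the coupled linear system from the stationarity conditions \eqref{eq:Keq}--\eqref{eq:alphaeq}. Stacking over agents, this reads $\Phi_t K_t=\Psi_t$ and $\Phi_t\alpha_t=\psi_t$, where
\[
\Phi_t=\operatorname{diag}\big(R_t^i\big)_{i\in[N]}+\big[B_t^{i\top}P_{t+1}^iB_t^j\big]_{i,j\in[N]},\qquad
\Psi_t=\big[B_t^{i\top}P_{t+1}^iA_t\big]_{i\in[N]},\qquad
\psi_t=\big[B_t^{i\top}\zeta_{t+1}^i\big]_{i\in[N]}.
\]
Using the identity $\Delta(\Phi^{-1}\Psi)=\Phi'^{-1}(\Delta\Psi-\Delta\Phi\,\Phi^{-1}\Psi)$, together with the bound on $\|\Phi_t^{-1}\|$ already used in Lemma~\ref{lemma:policyBound}, we expand each product, e.g.
\[
\Delta(B^\top PA)=\Delta B^\top P'A'+B^\top\Delta P\,A'+B^\top P\,\Delta A .
\]
This should give
\[
\|\Delta K_t\|=\cO\!\big(\sqrt N\xi^{a}\|P_{t+1}\|\,\epsilon+\sqrt N\xi^{b}\|\Delta P_{t+1}\|\big),
\]
and an analogous bound for $\|\Delta\alpha_t\|$ that additionally involves $\|\zeta_{t+1}\|\epsilon$ and $\|\Delta\zeta_{t+1}\|$. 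The difference from Lemma~\ref{lemma:policyBound} is that the forcing term is multiplied by the nominal $\|P_{t+1}\|$ and $\|\zeta_{t+1}\|$ rather than being a pure $\xi^k\epsilon$ term.

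\textbf{Step 2: propagate the Riccati perturbations backward.} Start from $\Delta P_T=\Delta\zeta_T=0$ and write $F_t=A_t-\sum_jB_t^jK_t^j$. Then
\[
\|\Delta F_t\|\le\epsilon+N\xi\epsilon+\sqrt N\xi\|\Delta K_t\|.
\]
Differencing \eqref{recatti:P} gives
\[
\|\Delta P_t\|\le 2\|F_t\|\|\Delta F_t\|\|P'_{t+1}\|+\|F_t\|^2\|\Delta P_{t+1}\|+2\xi^2\|\Delta K_t\|.
\]
Differencing \eqref{recatti:zeta} gives a similar bound, with extra terms from $\Delta B_t^j\alpha_t^j$ and $B_t^j\Delta\alpha_t^j$. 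We insert Step~1 and the nominal bounds $\|P_t\|,\|\zeta_t\|=\cO(\xi(N\xi^4)^{T-t})$ from the proof of Lemma~\ref{lemma:policyBound}, which hold for both dynamics. This yields a scalar recursion of the form
\[
e_t\le c_1N^{3/2}\xi^{9}e_{t+1}+c_2N^{k}\xi^{m}(N\xi^4)^{T-t-1}\epsilon .
\]

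\textbf{Step 3: unroll and read off the policy bound.} Unrolling the recursion gives products of the growing factors $(N\xi^4)^{T-s}$ over $s=t,\dots,T-1$. This produces exponents quadratic in $T-t$, namely of the form $(N\xi^4)^{(T-t)(T-t+1)/2}$ times polynomial factors. We then substitute $\|\Delta P_{t+1}\|$ and $\|\Delta\zeta_{t+1}\|$ back into Step~1 and absorb every polynomial prefactor $N^{k}\xi^{m}$ (with $N\ge1$ and $\xi>1$) into the crude envelope $N^{(T-t+2)^2/2}\xi^{2(T-t+2)^2}$. The shift to $T-t+2$ relative to Lemma~\ref{lemma:policyBound} is the room we use to absorb the extra $\|P_{t+1}\|$ and $\|\zeta_{t+1}\|$ factors in the forcing terms.

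\textbf{Main obstacle.} The delicate point is a uniform bound on $\|\Phi_t'^{-1}\|$ for the perturbed dynamics. I would first try the argument used for Lemma~\ref{lemma:policyBound}, namely a lower bound on $\sigma_{\min}(\Phi_t)$ in terms of $R_t^i\succ0$ and the invertibility assumed there. I would then use a Neumann-series argument, $\|\Phi'^{-1}\|\le 2\|\Phi^{-1}\|$ whenever $\|\Delta\Phi\|\le 1/(2\|\Phi^{-1}\|)$, which holds for $\epsilon$ sufficiently small. This restricts $\mathcal P$ to a small neighbourhood, consistent with the $\cO(\cdot)$ statement. The remaining work is exponent bookkeeping, which only needs to be dominated, not tight.
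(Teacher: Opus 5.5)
Your proposal follows essentially the same route as the paper's proof. You difference the stacked systems for $K_t,\alpha_t$ via $\Delta(\Phi_t^{-1})=-\Phi_t^{-1}(\Delta\Phi_t)(\Phi_t')^{-1}$, start from $\Delta P_T=\Delta\zeta_T=0$, propagate backward using the nominal bounds $\|P_t^E\|,\|\zeta_t^E\|=\cO(\xi(N\xi^4)^{T-t})$, and absorb all prefactors into the crude envelope. Your Neumann-series control of $\|(\Phi_t')^{-1}\|$ is more explicit than the paper, which simply treats the inverses as bounded.

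The one slip is the scalar recursion displayed in Step~2. $\Delta F_t$ already carries $\|P_{t+1}\|\epsilon$ through $\Delta K_t$, and it is then multiplied by $\|P'_{t+1}\|$. So the forcing is of order $\|P_{t+1}\|^2\epsilon$, and the coefficient of $e_{t+1}$ carries the growing factor $\|P'_{t+1}\|$ rather than the constant $c_1N^{3/2}\xi^9$. This gives the paper's recursion $\|\Delta P_t\|=\cO(N^{2(T-t)}\xi^{8(T-t)}\epsilon+N^{T-t+1}\xi^{4(T-t)+3}\|\Delta P_{t+1}\|)$. It is this growing coefficient that produces the quadratic exponent you invoke in Step~3, and the result still fits inside $N^{(T-t+2)^2/2}\xi^{2(T-t+2)^2}$.
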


The proof is similar to Lemma~\ref{lemma:policyBound}: the dynamics perturbations are first propagated through the feedback equations and then through the backward Riccati recursion. The complete proof is given in Appendix~\ref{appendix:lemma2Proof}. Next, we upper-bound the policy perturbation from $\theta^E$ and $\widehat \theta^{id}$ for any dynamics $(A, B) \in \cP$.

\begin{proposition}\label{remark:fullPolicyBound}
    For any $(A,B)\in\mathcal{P}$ and $t \in [T]^-$, the policy induced by the true cost parameters $\theta^E$ and the policy induced by the perturbed identified cost parameters $\widehat{\theta}^{id}$, both under dynamics $(A,B)$, satisfy

{\small
\begin{align*}
\|(K_t)_{(A,B,\theta^E)}-(K_t)_{(A,B,\widehat{\theta}^{id})}\|
&=
\cO\!\Big(
N^{(T-t+2)^2}\xi^{2(T-t+2)^2}\epsilon
\Big), \\
\|(\alpha_t)_{(A,B,\theta^E)}-(\alpha_t)_{(A,B,\widehat{\theta}^{id})}\|
&=
\cO\!\Big(
N^{\frac{(T-t+2)^2}{2}}\xi^{2(T-t+2)^2}\epsilon
\Big).
\end{align*}}
\end{proposition}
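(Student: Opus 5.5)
The plan is a triangle inequality along the path in Fig.~\ref{fig:policy-perturbation}, going through the expert dynamics $(A^E,B^E)$. Fix $(A,B)\in\mathcal P$ and $t\in[T]^-$. Write $\Pi_{(A,B,\theta)}$ for either $K_t$ or $\alpha_t$ induced by $\theta$ on $(A,B)$. We split
\begin{align*}
\Pi_{(A,B,\theta^E)}-\Pi_{(A,B,\widehat\theta^{id})}
&= \big[\Pi_{(A,B,\theta^E)}-\Pi_{(A^E,B^E,\theta^E)}\big]
+\big[\Pi_{(A^E,B^E,\theta^{id})}-\Pi_{(A,B,\theta^{id})}\big]\\
&\quad+\big[\Pi_{(A,B,\theta^{id})}-\Pi_{(A,B,\widehat\theta^{id})}\big].
\end{align*}
This identity is valid because $\theta^E$ and $\theta^{id}$ induce the same expert Nash policy on $(A^E,B^E)$. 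That holds by the definition of $\theta^{id}$ as an optimal (zero-residual) solution of \eqref{eq:QP} satisfying the characterization of Proposition~\ref{prop:characterization}-(i). We also use that, with $Q\succeq0$ and $R\succ0$, the forward coupled Riccati recursion has a unique solution, so the equilibrium is unique. Hence $\Pi_{(A^E,B^E,\theta^E)}=\Pi_{(A^E,B^E,\theta^{id})}$, and we may insert and subtract both terms.

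Next we bound each bracket. The first bracket is a pure dynamics perturbation with fixed cost $\theta^E$. Lemma~\ref{lemma:policyBoundDyn} bounds it by $\cO(N^{(T-t+2)^2/2}\xi^{2(T-t+2)^2}\epsilon)$, given that $\mathcal P$ is the $\epsilon$-ball of dynamics around $(A^E,B^E)$. The second bracket is the same situation with fixed cost $\theta^{id}$, and we apply Lemma~\ref{lemma:policyBoundDyn} again to get the same order. For this we need $\theta^{id}$ to obey the uniform bound \eqref{eq:uniformNormB}. I would secure that by enlarging $\xi$ if needed; the box constraints in \eqref{eq:QP} make this possible. The third bracket is a pure cost-parameter perturbation on the fixed dynamics $(A,B)$. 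Lemma~\ref{lemma:policyBound} holds for arbitrary dynamics and gives $\cO(N^{(T-t)^2}\xi^{2(T-t+1)^2}\epsilon)$ for $K_t$ and $\cO(N^{(T-t)^2/2}\xi^{2(T-t+1)^2}\epsilon)$ for $\alpha_t$.

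To finish, we sum the three bounds and dominate each exponent by the largest one. Since $N\ge1$ and $\xi>1$, we have $(T-t)^2\le (T-t+2)^2$ and $(T-t+1)^2\le(T-t+2)^2$. For the $\alpha_t$ bound every term is therefore at most $N^{(T-t+2)^2/2}\xi^{2(T-t+2)^2}\epsilon$. For $K_t$ the cost term has $N^{(T-t)^2}$, which is not dominated by $N^{(T-t+2)^2/2}$ when $T-t$ is large. We therefore state the looser common bound $N^{(T-t+2)^2}\xi^{2(T-t+2)^2}\epsilon$, which dominates all three terms. This explains the asymmetry between the $K_t$ and $\alpha_t$ exponents in the statement.

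The main obstacle is justifying the middle step, namely that $\theta^{id}$ and $\theta^E$ yield the same policy on $(A^E,B^E)$ and that Lemma~\ref{lemma:policyBoundDyn} applies to $\theta^{id}$ with the same $\xi$. The first part requires that the zero-residual solution of \eqref{eq:QP} be genuinely consistent with the observed equilibrium, together with uniqueness of the finite-horizon Nash equilibrium, which needs invertibility of the coupled gain equations. I would assume both, as is implicit in Lemmas~\ref{lemma:policyBound} and~\ref{lemma:policyBoundDyn}. The remaining steps are bookkeeping of exponents.
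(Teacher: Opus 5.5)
Your proposal is correct and is essentially the paper's proof. It uses the same three-term triangle inequality routed through $(A^E,B^E)$, relying on $\theta^E$ and $\theta^{id}$ inducing the same policy there, with Lemma~\ref{lemma:policyBoundDyn} applied to both $\theta^E$ and $\theta^{id}$, Lemma~\ref{lemma:policyBound} for the cost leg, and all exponents then absorbed into the largest one; your explanation of the $N^{(T-t+2)^2}$ versus $N^{(T-t+2)^2/2}$ asymmetry spells out what the paper leaves implicit.

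Two side remarks are slightly off but harmless. First, $Q\succeq0$ and $R\succ0$ do not by themselves make $\Phi_t$ invertible in the backward coupled recursion, so uniqueness must indeed be assumed, as you end up doing. Second, the constraints in \eqref{eq:QP} bound neither $\bar l$ nor the diagonal entries from above, so $\xi$ covers $\theta^{id}$ simply because $\theta^{id}$ is a fixed vector.
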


\begin{proof}
Since both $\theta^E$ and $\theta^{id}$ satisfy the conditions \eqref{eq:characUnconstrained} at $(A^E,B^E)$, applying Lemma~\ref{lemma:policyBoundDyn} to both $\theta^E$ and $\theta^{id}$ and Lemma~\ref{lemma:policyBound} to the cost parameter perturbation, the triangle inequality gives, for any $(A,B)\in\mathcal{P}$ and for all $t \in [T]^-$,
{\small
\begin{align*}
&\|(K_t)_{(A,B,\theta^E)}-(K_t)_{(A,B,\widehat{\theta}^{id})}\| \\
\le & \|(K_t)_{(A,B,\theta^E)} - (K_t)_{(A^E,B^E,\theta^E)}\|
  \\
& \quad +\|(K_t)_{(A^E,B^E,\theta^{id})} - (K_t)_{(A,B,\theta^{id})}\|\\
& \qquad
+ \|(K_t)_{(A,B,\theta^{id})} - (K_t)_{(A,B,\widehat{\theta}^{id})}\|,
\end{align*}}

\noindent
and the stated bound follows by substituting the bounds from Lemma~\ref{lemma:policyBound} and Lemma~\ref{lemma:policyBoundDyn}. The bound on $\alpha_t$ follows analogously. 
\end{proof}

Proposition~\ref{remark:fullPolicyBound} bounds the distance between the expert policy and the policy induced by both perturbed dynamics and perturbed identified cost parameters (the bold purple line in Fig.~\ref{fig:policy-perturbation}). Next, we propagate the policy perturbation through the trajectory and cost expressions to obtain the transferability result. 


\subsection{Transferability of the identified cost parameters to neighbouring dynamics} \label{sec:transBound}

Based on the policy perturbation results from Proposition~\ref{remark:fullPolicyBound}, we now present the main results of the transferability of the identified cost parameters to neighbouring dynamics. 

\begin{theorem} \label{theorem:transferability}
    The identified cost parameters with errors $\widehat{\theta}^{id}$ are $\nu$-transferable to \(\mathcal P\) (Definition~\ref{def:nu-transferable}), where
{\small
\begin{align}\cP:=\left\{(\hat A, \hat B) \; \bigg| \; \max_{t\in[T]^-,\,i\in[N]} \{\|\hat A_{t} - A_{t}^E\|, \|\hat B_t^i - B_t^{E,i}\|\}\le \epsilon\right\} \label{eq:dynamicNeighbour}\end{align}
and the cost value perturbation bound $\nu =\cO\!\Big(\epsilon T^6(N\xi^2)^{\cO(T^2)}\Big).$}
\end{theorem}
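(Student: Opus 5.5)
Fix player $i$ and dynamics $(A,B)\in\cP$. Write $(K,\alpha):=(K^*,\alpha^*)_{(A,B,\theta^E)}$ and $(\hat K,\hat\alpha):=(K^*,\alpha^*)_{(A,B,\widehat\theta^{id})}$. The plan is to express $\cJ^i(\theta^E,\cdot)$ explicitly as a polynomial function of the policy parameters through the first and second moments of the closed-loop state, and then show this function is Lipschitz on the relevant bounded set. Once that is done, Proposition~\ref{remark:fullPolicyBound} supplies the policy gaps $\|\Delta K_t\|,\|\Delta\alpha_t\|$ at every $t$, and the Lipschitz constant converts them into a bound on the cost gap. Since $\nu$ only needs to bound the difference in one direction, it suffices to bound $|\cJ^i(\theta^E,(K,\alpha))-\cJ^i(\theta^E,(\hat K,\hat\alpha))|$.

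\textbf{Step 1: moment representation.} Under $u_t^j=-K_t^jx_t-\alpha_t^j$, define $F_t:=A_t-\sum_jB_t^jK_t^j$, $m_t:=\bbE x_t$ and $\Sigma_t:=\mathrm{Cov}(x_t)$. These satisfy
\[
m_{t+1}=F_tm_t-\textstyle\sum_jB_t^j\alpha_t^j,\qquad \Sigma_{t+1}=F_t\Sigma_tF_t^\top+W_t,
\]
with $m_0=\mu_0$ and $\Sigma_0$ given. Each stage cost equals
\[
\mathrm{tr}(Q^i_{t+1}(\Sigma_{t+1}+m_{t+1}m_{t+1}^\top))+l^{i\top}_{t+1}m_{t+1}+\mathrm{tr}(K_t^{i\top}R_t^iK_t^i(\Sigma_t+m_tm_t^\top))+2\alpha_t^{i\top}R_t^iK_t^im_t+\alpha_t^{i\top}R_t^i\alpha_t^i .
\]
By \eqref{eq:uniformNormB}, $\|F_t\|\le\xi+N\xi^2\le 2N\xi^2$. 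Unrolling the recursions gives $\|m_t\|\le t\,(2N\xi^2)^{t+1}$ and $\|\Sigma_t\|\le t\,(2N\xi^2)^{2t+1}$, with the same bounds holding for the hatted quantities. The uniform bound $\xi$ in \eqref{eq:uniformNormB} has to cover both the expert policy and the perturbed policy; for small $\epsilon$ this follows from Proposition~\ref{remark:fullPolicyBound}, possibly after enlarging $\xi$ by a constant.

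\textbf{Step 2: perturbation of the moments.} Let $\delta_t:=\max\{\|\Delta K_s\|,\|\Delta\alpha_s\|:s\le t\}$. Telescoping the recursions gives
\[
\|\Delta m_{t+1}\|\le\|F_t\|\|\Delta m_t\|+N\xi\|m_t\|\delta_t+N\xi\delta_t,
\]
\[
\|\Delta\Sigma_{t+1}\|\le\|F_t\|^2\|\Delta\Sigma_t\|+2\|F_t\|N\xi\delta_t\|\Sigma_t\|+\cO(\delta_t^2).
\]
Unrolling from $\Delta m_0=0$ and $\Delta\Sigma_0=0$ gives
\[
\|\Delta m_t\|,\ \|\Delta\Sigma_t\|=\cO\big(t^2(2N\xi^2)^{4t}\,\delta_t\big).
\]
From Proposition~\ref{remark:fullPolicyBound}, $\delta_t\le\max_{s}\cO\big((N\xi^2)^{(T-s+2)^2}\epsilon\big)=\cO\big((N\xi^2)^{(T+2)^2}\epsilon\big)$; this uses $N^{a}\xi^{2a}=(N\xi^2)^a$.

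\textbf{Step 3: perturbation of the cost.} Each stage cost is a sum of products of at most four factors drawn from $\{Q,R,l,K,\alpha,m,\Sigma\}$. Its perturbation is therefore bounded by the product rule: at most four terms, each consisting of one factor difference times bounded factors. This gives a per-stage bound of $\cO\big(t^4(N\xi^2)^{\cO(t)}(\delta_t+\|\Delta m_t\|+\|\Delta m_{t+1}\|+\|\Delta\Sigma_{t+1}\|)\big)$. Summing over the $T$ stages contributes a further factor of $T$. The resulting polynomial factors in $T$ are at most $T\cdot T^4\cdot T^{\cO(1)}$; collecting them carefully yields the $T^6$. All exponential factors are absorbed into $(N\xi^2)^{\cO(T^2)}$, giving $\nu=\cO(\epsilon T^6(N\xi^2)^{\cO(T^2)})$ uniformly over $(A,B)\in\cP$ and $i\in[N]$.

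The main obstacle is Step 2. The quadratic covariance recursion and the cross terms $m_tm_t^\top$ generate second-order terms in $\delta$ and products of growing moment bounds. Keeping these at the claimed order requires two things: treating the terms quadratic in $\delta$ as higher order via the small-$\epsilon$ assumption, and tracking the polynomial-in-$T$ factors carefully so that they total $T^6$ rather than being absorbed loosely. Everything else is a routine product-rule argument once Proposition~\ref{remark:fullPolicyBound} is invoked.
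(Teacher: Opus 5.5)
Your proposal is correct and follows essentially the same route as the paper. Both arguments take the worst-case policy gaps from Proposition~\ref{remark:fullPolicyBound} and propagate them through the closed-loop mean and covariance dynamics. The paper handles the mean via the stacked maps $F$ and $Z$, and the covariance via exactly your recursion $\Sigma_{t+1}=F_t\Sigma_tF_t^\top+W_t$. Both then pass the perturbation through the quadratic cost and discard terms of order $\epsilon^2$.

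Your stage-wise bookkeeping is only a cosmetic difference, but two details need attention. First, the bounds $\|m_t\|\le t(2N\xi^2)^{t+1}$ and $\|\Sigma_t\|\le t(2N\xi^2)^{2t+1}$ are false at $t=0$; replace $t$ by $t+1$. Second, your $T$-accounting in Step 3 is hand-waved. An honest count gives a per-stage factor of order $t^3$, and hence $T^4\le T^6$ overall, so that step is harmless.
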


Theorem~\ref{theorem:transferability} shows that the cost value perturbation scales linearly with the errors of the identified cost parameters and the dynamics $\epsilon$. However, the constant $\nu$ depends exponentially on the square of the horizon $T^2$, inherited from the policy perturbation bound in Proposition~\ref{remark:fullPolicyBound}. Later, we numerically show that large perturbations can prevent the identified cost from recovering the Nash equilibrium policies and trajectories.

Here, we provide a sketch of the proof. The complete proof is provided in Appendix~\ref{sec:cost-bound-derivation}.

\begin{proof}[Proof sketch]
Based on Proposition~\ref{remark:fullPolicyBound} and consider the worst-case error bound over \(t\in[T]^-\), we use
\(
\bar\Delta K
=
\cO\!\Big(N^{(T+2)^2}\xi^{2(T+2)^2}\epsilon\Big),
\quad
\bar\Delta\alpha
=
\cO\!\Big(N^{\frac{(T+2)^2}{2}}\xi^{2(T+2)^2}\epsilon\Big).
\) \vspace{2mm}

Then, we upper bound the state trajectory perturbations: \(
\|\Delta\vx\|
\le
\|\Delta F\|\,{\|\mu_0\|}
+
\|\Delta Z\|\,\|\bm{\alpha}\|
+
{\|Z\|}\,\|\Delta\bm{\alpha}\|.
\)

Using a conservative product-difference estimate for the block products defining \(F\) and \(Z\), for example \(\|\hat F_1\hat F_0-F_1F_0\|\le \|\hat F_1-F_1\|\,\|\hat F_0\|+\|F_1\|\,\|\hat F_0-F_0\|\), we obtain \(\|\Delta F\|=\cO(T^2\bar F_t^T\bar\Delta F_t)\) and \(\|\Delta Z\|=\cO(T^3\sqrt N\,\xi\bar F_t^T\bar\Delta F_t)\).

For the stacked inputs, since \(\vu^i=-K^i\vx-\alpha^i\), we have
\[
\|\Delta\vu^i\|
\le
\|\Delta K^i\|\,\|\vx\|
+
{\|K^i\|}\,\|\Delta\vx\|
+
\|\Delta\alpha^i\|
\]
We then substitute these state and input perturbations into the quadratic cost difference. {The covariance trace terms due to initial-state uncertainty and process noise also contribute \(\cO(\epsilon)\).} Collecting the dominant terms yields the stated transferability bound.     
\end{proof}

In practice, the Nash equilibrium policy used for identification is estimated from finite demonstrations rather than known exactly. Combining Theorem~\ref{theorem:transferability} with an existing sample complexity result for policy estimation yields the following guarantee.

\begin{corollary}\label{cor:sampleTransfer}
Suppose the Nash equilibrium policy used for cost identification in Algorithm~\ref{alg:constrainedILQG} is estimated from
\(
n = \mathcal{O}\!\left(\epsilon^{-2}\log\!\left(NT/\delta\right)\right)
\)
demonstration samples. Then, with probability at least \(1-\delta\), the identified cost parameters are \(\nu\)-transferable to \(\mathcal P\) as per Theorem~\ref{theorem:transferability}.
\end{corollary}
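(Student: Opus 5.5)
The plan is to chain the sample-complexity guarantee for cost identification from \cite[Theorem 1]{ren2025identify} with Theorem~\ref{theorem:transferability}. The only new ingredient is a high-probability event on which the error condition \eqref{eq:costParamBound} holds. On that event, Theorem~\ref{theorem:transferability} applies verbatim.

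First, I will fix the target accuracy $\epsilon$ and confidence $\delta$. I will then invoke \cite[Theorem 1]{ren2025identify}. It states that if the equilibrium policy $(\widehat K,\widehat\alpha)$ is estimated (e.g.\ by least squares regression of inputs on states) from $n$ i.i.d.\ demonstration trajectories, then for each agent $i$ and time $t$ the estimation error satisfies $\|\widehat K_t^i-K_t^{i*}\|+\|\widehat\alpha_t^i-\alpha_t^{i*}\|\le \cO(\sqrt{\log(1/\delta')/n})$ with probability at least $1-\delta'$. I will take a union bound over the $NT$ pairs $(i,t)$ with $\delta'=\delta/(NT)$. This gives a uniform policy error of order $\epsilon$ with probability $1-\delta$ once $n=\cO(\epsilon^{-2}\log(NT/\delta))$.

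Second, I will convert the policy error into the cost-parameter error \eqref{eq:costParamBound}. The identified $\widehat\theta^{id}$ solves \eqref{eq:QP} with $M_t^i$ built from $(\widehat K,\widehat\alpha)$, whereas $\theta^{id}$ uses the exact $(K^*,\alpha^*)$. The matrices $M_t^i$, $\bar\Delta_t^i$, $\bar\Omega_t^i$ and $E_t^i$ are polynomial in the policy entries, so they are Lipschitz on the bounded set \eqref{eq:uniformNormB}. Hence $\|\widehat M_t^i-M_t^i\|=\cO(\epsilon)$, with constants depending on $N,\xi,T$. This is precisely the content of \cite[Theorem 1]{ren2025identify}, which states that \eqref{eq:costParamBound} holds with high probability. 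I will therefore cite it directly rather than redoing the perturbation analysis of the QP. Any polynomial constants can be absorbed by rescaling $\epsilon$ inside the $\cO(\cdot)$ of $n$.

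Third, on this event the hypothesis \eqref{eq:costParamBound} holds. Theorem~\ref{theorem:transferability} then yields $\nu$-transferability to $\cP$ in \eqref{eq:dynamicNeighbour} with $\nu=\cO(\epsilon T^6(N\xi^2)^{\cO(T^2)})$, and this holds with probability at least $1-\delta$.

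The main obstacle is the second step. The QP \eqref{eq:QP} has a non-unique solution set, so Lipschitz dependence of a minimizer on the data is not automatic. I will rely on the cited theorem's statement that the identified parameters are within $\epsilon$ of some element $\theta^{id}$ of the exact solution set. This is the only property Lemma~\ref{lemma:policyBound} and Proposition~\ref{remark:fullPolicyBound} require, since $\theta^{id}$ there denotes an arbitrary solution of \eqref{eq:QP}.
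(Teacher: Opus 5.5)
Your argument is essentially the paper's: the paper also cites \cite[Theorem 1]{ren2025identify} to get \eqref{eq:costParamBound} with probability at least $1-\delta$ from $n=\mathcal{O}(\epsilon^{-2}\log(NT/\delta))$ samples, and then applies Theorem~\ref{theorem:transferability} on that event, so your intermediate policy-error and Lipschitz discussion is unnecessary (and not quite accurate, since $E_t^i$ contains $F_t^{-1}$ and $\bar\Delta_t^i,\bar\Omega_t^i$ depend on parameters identified at later time steps, not only on the policy). You should, however, state explicitly the hypothesis under which the cited result holds: the active set of the constraints \eqref{constraint:psd} must be unchanged under the sample-estimated equilibrium, and this is exactly what resolves the non-uniqueness obstacle in \eqref{eq:QP} that you flagged.
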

\begin{proof}
    \cite[Theorem 1]{ren2025identify} shows that \eqref{eq:costParamBound} holds with probability at least \(1-\delta\) with \(n \geq \mathcal{O}\!\left(\epsilon^{-2}\log\!\left(NT/\delta\right)\right)\) demonstration samples under conditions that active set of constraints in \eqref{constraint:psd} remains unchanged with the sample-estimated equilibrium. It follows directly from Theorem~\ref{theorem:transferability} that the stated sample complexity yields \(\nu\)-transferability. 
\end{proof} \vspace{-2mm}

\begin{figure}[t]
    \centering
    \includegraphics[width=0.98\linewidth]{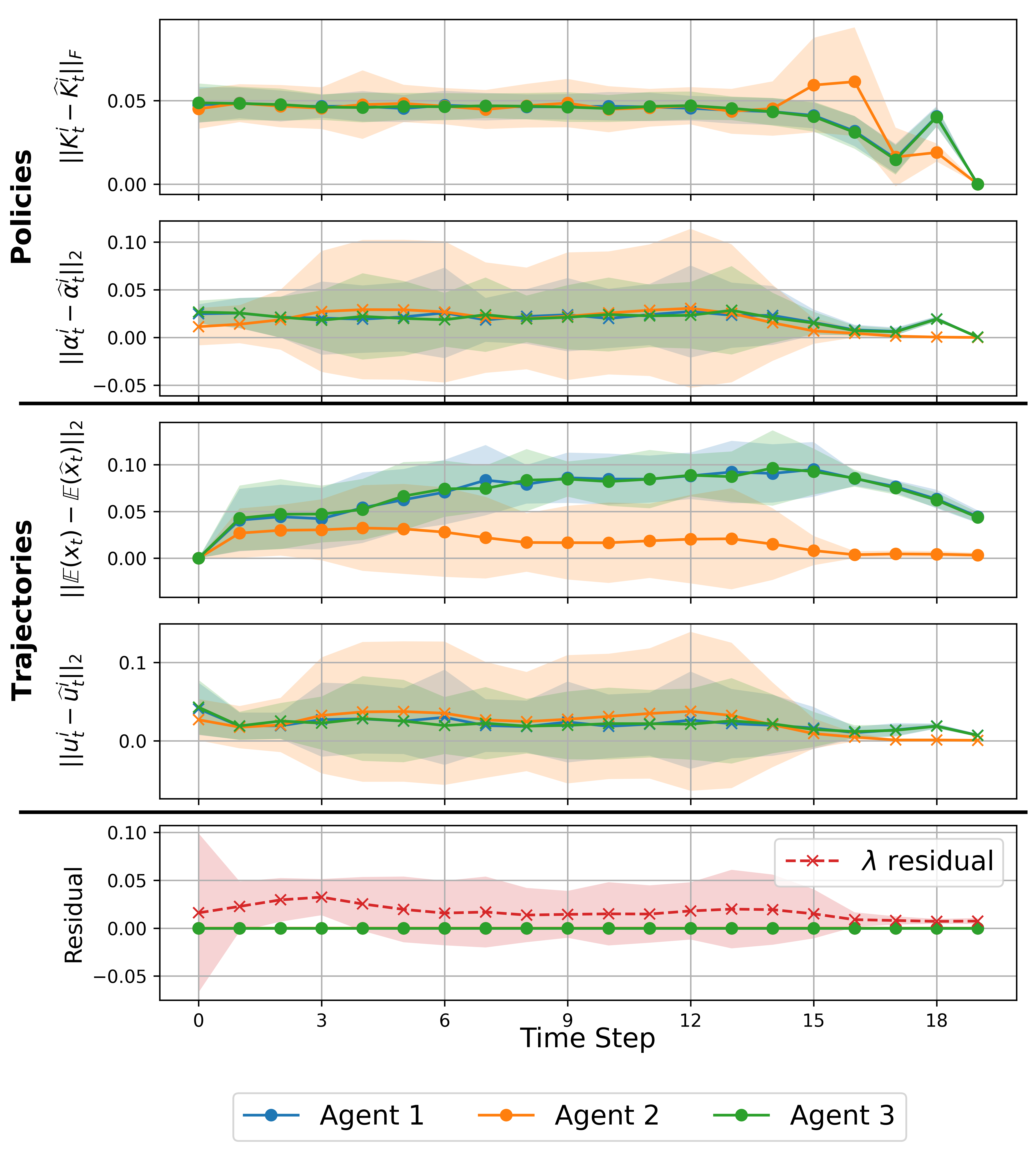} \vspace{-3mm}
    \caption{\textit{Optimization residuals, and deviations in the policy, state, and input trajectories generated from the identified cost parameters and optimal dual values relative to the expert parameters, across 100 randomized cost matrices.}} \vspace{-1mm}
    \label{fig:norm-diff-constrained-scalar}
\end{figure}

\section{Case studies} \label{sec:caseStudy}  \vspace{-1mm}
In this section, we aim to a) test, in a constrained setting, whether Algorithm~\ref{alg:constrainedILQG} can identify cost parameters that reproduce a given generalized Nash equilibrium policy and trajectories, and b) empirically examine the transferability result of Theorem~\ref{theorem:transferability}. To address a), we test our method in a numerical example (Section~\ref{sec:numerical}). To address b), we consider a multi-vehicle driving scenario (Section~\ref{sec:multicar}) and a real-world robot experiment (Section~\ref{sec:robot}). We empirically evaluate the perturbations in the optimal cost values induced by the identified cost parameters under the perturbed dynamics. We then compare these errors with the predictions from Theorem~\ref{theorem:transferability}.

\subsection{Cost identification in constrained LQG game} \label{sec:numerical}

We conducted a numerical simulation of an $N$-player constrained LQG game \eqref{problem:primal}. The game was defined with $N=3$ players and states $x_t \in \bbR^3$. The initial state $x_0$ was drawn from a Gaussian distribution \eqref{eq:Gaussian} with $\mu_0 = [1.0, \; 0.0, \; -1.0]^\top$ and $\Sigma_0$ being a $3\times 3$ identity matrix. The input for each player was $u^i_t \in \bbR$, and the time horizon was $T=20$. For all $t \in [T]^-$, we chose dynamic matrices $A_t$ to be a $3 \times 3$ identity matrix and $B^1_t  = [1, 0, 0]^\top$, $B^2_t  = [0, 1, 0]^\top$, $B^3_t  = [0, 0, 1]^\top$.  We generated 100 randomized LQG games, where in each game, the cost matrices $Q_t^i, R_t^i$ were diagonal with random positive entries. Collision avoidance in expected state trajectories was enforced as in \eqref{constraint:primalAffine}, and was shared and known among all agents: $\bbE[x^1_t - x^2_t] \geq 0.5, \;\; \bbE[x^2_t - x^3_t] \geq 0.5, \; \forall t \in [T]$. We chose simple cost and dynamics matrices to demonstrate the effectiveness of Algorithm~\ref{alg:constrainedILQG} and for better visualization, where all agents tried to get close to the origin while respecting the expectation state constraints. The method readily extends to more complex dynamics and cost matrices.

\begin{figure}[t]
    \centering
    \includegraphics[width=0.9\linewidth]{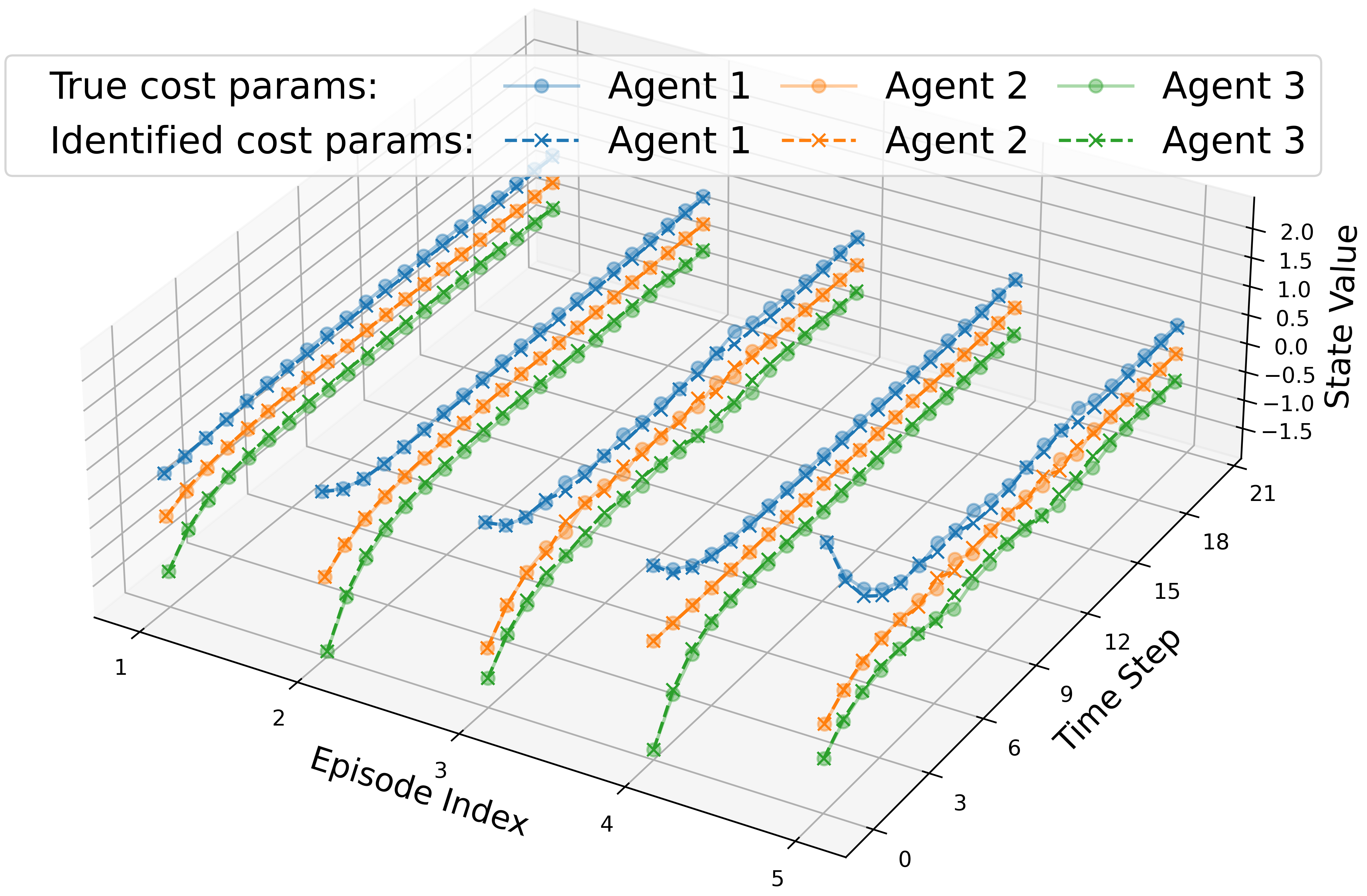}
    \caption{\textit{Five exemplary episodes, where we compare the trajectories generated by the identified cost parameters with the ground-truth. The trajectories generated by the identified cost parameters closely align with the ground-truth in all episodes.}}
    \label{fig:trajectories_num} 
\end{figure}

\begin{figure*}[ht]
    \centering
    \includegraphics[width=\textwidth]{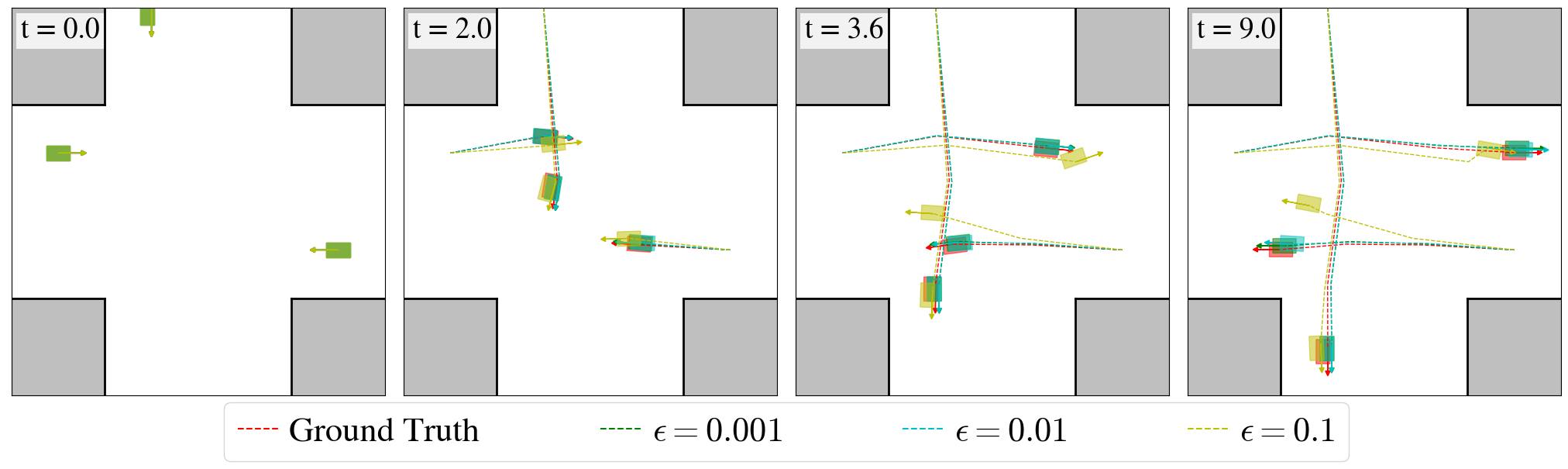}
    \caption{\textit{Trajectories from the expert cost parameters and the identified parameters. When the perturbations in the cost parameters and in the dynamics are small, the trajectories are close to the ground-truth. When the perturbation increases to $\epsilon = 0.1$, the recovered trajectory deviates significantly from the ground-truth.}}
    \label{fig:trajectory_params_noise}
\end{figure*}
\begin{figure}[t]
    \centering
    \includegraphics[width=0.95\linewidth]{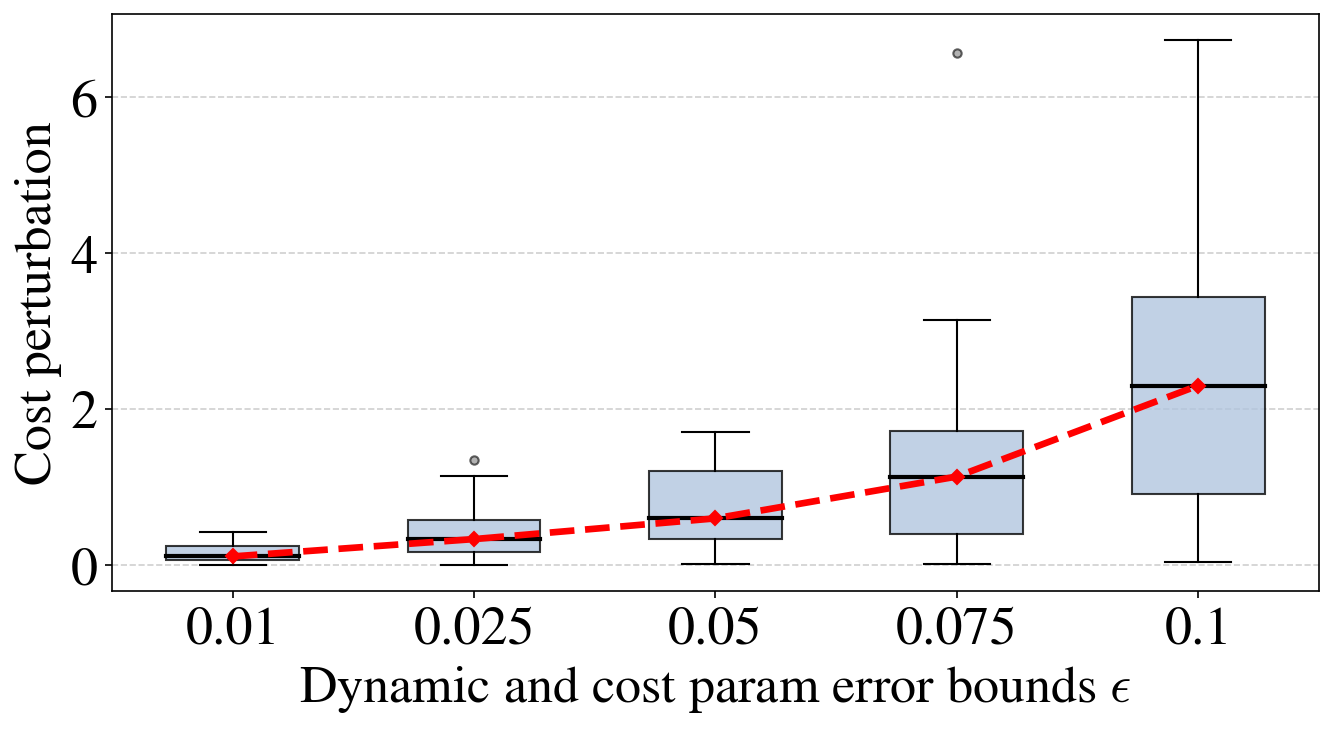}
    \caption{\textit{Cost perturbations as defined in \eqref{eq:accumCostPerturbation} for the multi-car example, where we measure the perturbation of the cumulative cost value over all agents with respect to cost parameters error and dynamics errors $\epsilon$. When $\epsilon \leq 0.075$, the median of the cost value perturbation scales linearly with $\epsilon$. With a larger $\epsilon$, the cost value perturbation increases more drastically.}}
    \label{fig:cost_error_boxplot}
\end{figure}

In each game, we first obtained the generalized Nash equilibrium policy $(K^{*}, \alpha^{*})$ \cite{Ren2025}. Then, we used Algorithm~\ref{alg:constrainedILQG} to identify the cost parameters from $(K^{*}, \alpha^{*})$. From the identified cost parameters, we recovered the policy $\{\widehat{K}^{i}_{t},\widehat{\alpha}^{i}_{t}\}^{i \in [N]}_{t \in [T]^-}$ and trajectories $\{\bbE\{\widehat{x}_{t+1}\},\widehat{u}^{i}_{t}\}^{i \in [N]}_{t \in [T]^-}$. The residual errors have two sources. First, solving~\eqref{eq:QP} and~\eqref{eq:LCPdual} incurs numerical error. Second, the generalized Nash equilibrium policy used as input to Algorithm~\ref{alg:constrainedILQG} is obtained via \cite[Algorithm 1]{Ren2025}, which itself is an approximate equilibrium. The identified cost parameters therefore inherit this approximation error. Figure~\ref{fig:norm-diff-constrained-scalar} illustrates the optimization residuals in~\eqref{eq:QP} and \eqref{eq:LCPdual}, and the norm differences between the recovered and the true policy, state, and input trajectories. The optimization residuals remain near zero, indicating that the identified parameters satisfy the characterizations \eqref{eq:characUnconstrained} and \eqref{eq:dualCharacterization}. Furthermore, the recovered policy parameters, states, and control inputs closely match the ground truth across 100 randomized episodes with varying cost matrices. 

Fig.~\ref{fig:trajectories_num} presents five examples of the recovered three-player trajectories obtained from the identified cost parameters, compared against the ground-truth. The recovered trajectories align closely with the true ones. The identified cost parameters by Algorithm~\ref{alg:constrainedILQG} successfully reproduce the true multi-agent behaviours and respect the collision avoidance constraints.

\subsection{Transferability of the identified cost parameters in an unconstrained LQG game} 
\label{sec:multicar}

In this section, we apply Algorithm~\ref{alg:constrainedILQG} to a driving scenario, and empirically examine the transferability of the identified cost parameters under perturbations on both the cost parameters and the dynamics.

We considered a three-vehicle interaction at an unsignalized cross intersection, illustrated in Fig.~\ref{fig:trajectory_params_noise}.
At $t=0$, three vehicles approached from different directions and aimed to cross as quickly as possible while avoiding collisions.
This interaction was modelled as an unconstrained LQG game, that is \eqref{problem:primal} without \eqref{eq:primalAffine}. The dynamics and cost parameters were defined as in~\cite[Section~IV.B]{ren2025identify}, and the collision avoidance is considered as a soft linear cost via local approximation.
We first computed the Nash equilibrium policies of all three agents.
Algorithm~\ref{alg:constrainedILQG}, with \texttt{setting} set to \texttt{unconstrained}, was then applied to identify the cost parameters from the Nash policies. The identified parameters ${\theta}^{id}$ were then used to recover the Nash equilibrium policies and the trajectories.

To assess transferability, we injected uniform noise $\epsilon$ into the identified cost parameters ${\theta}^{id}$ and the dynamics $(A,B)$, which gives $\widehat{\theta}^{id}$ and $(A',B')$, respectively.
For each noise level, $\epsilon$ was sampled uniformly within $\pm 10^{-4}$ of the nominal value. Figure~\ref{fig:trajectory_params_noise} shows representative closed-loop trajectories at three error levels $\epsilon \in \{10^{-3}, 10^{-2}, 10^{-1}\}$. At $\epsilon = 10^{-1}$ the deviation is pronounced and the original intersection crossing trajectories are no longer recovered. \vspace{1mm}

Figure~\ref{fig:cost_error_boxplot} reports the cost value perturbation summed over all agents across $30$ independent noise realizations 
{\scriptsize
\begin{equation} \label{eq:accumCostPerturbation}
\sum_{i=1}^N \left[\mathcal{J}^i\!\left(\theta^E, (K^*, \alpha^*)_{(A',B',\theta^E)}\right) - \mathcal{J}^i\!\left(\theta^E, (K^*, \alpha^*)_{(A',B',\widehat{\theta}^{id})}\right)\right],
\end{equation}}
When the error on dynamics and cost parameters satisfies $\epsilon \leq 0.075$ , the median cost value perturbation grows linearly with $\epsilon$, which is consistent with the theoretical result in Theorem~\ref{theorem:transferability}. When $\epsilon > 0.075$, the median cost value perturbation grows more drastically, indicating that the identified cost parameters are no longer transferable to the perturbed dynamics. This is consistent with Theorem~\ref{theorem:transferability} and the trajectory deviations observed in Fig.~\ref{fig:trajectory_params_noise}.

\subsection{Robot experiments for transferability} \label{sec:robot}

To examine the real-world performance, we further validate the transferability result of Theorem~\ref{theorem:transferability} on a robot testbed, consisting of three \textit{Nvidia JetBots} running on the Robot Operating System (ROS2). We use an \textit{OptiTrack} motion capture system to measure the states of the robots.  \vspace{1mm}

Each Jetbot $i$ follows a unicycle model with state $(p_x^i,p_y^i,\phi^i)$. Its linear and angular velocities are \(v^i= \tfrac{R_w}{2} \big(k^{i,r}u^{i,r}+k^{i,l}u^{i,l}\big), \quad \omega^i=\tfrac{R_w}{L_a} \big(k^{i,r}u^{i,r}-k^{i,l}u^{i,l}\big), \)
where $R_w$ is the wheel radius and $L_a$ is the axle length, and $k^{i,l}, k^{i,r}$ are the left and right wheel gains. Linearizing around a nominal trajectory $(\bar p_{x,t}^i,\bar p_{y,t}^i,\bar\phi_t^i,\bar v_t^i)$ and discretizing with step \(\Delta t\) gives linear dynamics matrices 
{\scriptsize
\begin{equation*}
A_t^i \approx \Delta t\begin{bmatrix} 0 & 0 & -\bar v_t^i\sin\bar\phi_t^i \\ 0 & 0 & \bar v_t^i\cos\bar\phi_t^i \\ 0 & 0 & 0 \end{bmatrix}\!, \;\;
B_t^i \approx \Delta t\tfrac{R_w}{2}\begin{bmatrix} k^{i,l}\cos\bar\phi_t^i & k^{i,r}\cos\bar\phi_t^i \\ k^{i,l}\sin\bar\phi_t^i & k^{i,r}\sin\bar\phi_t^i \\ -\tfrac{2}{L_a}k^{i,l} & \tfrac{2}{L_a}k^{i,r} \end{bmatrix}.
\end{equation*}}

On different surfaces, the change of wheel gains $k^{i,l}, k^{i,r}$ changes $B_t^i$, while $A_t$ remains unchanged. By conducting an identification on the wheel gains, we can estimate the dynamics parameters for each surface. 

\begin{figure}[t]
    \centering
    \begin{subfigure}[b]{0.49\linewidth}
        \centering
        \includegraphics[width=\linewidth]{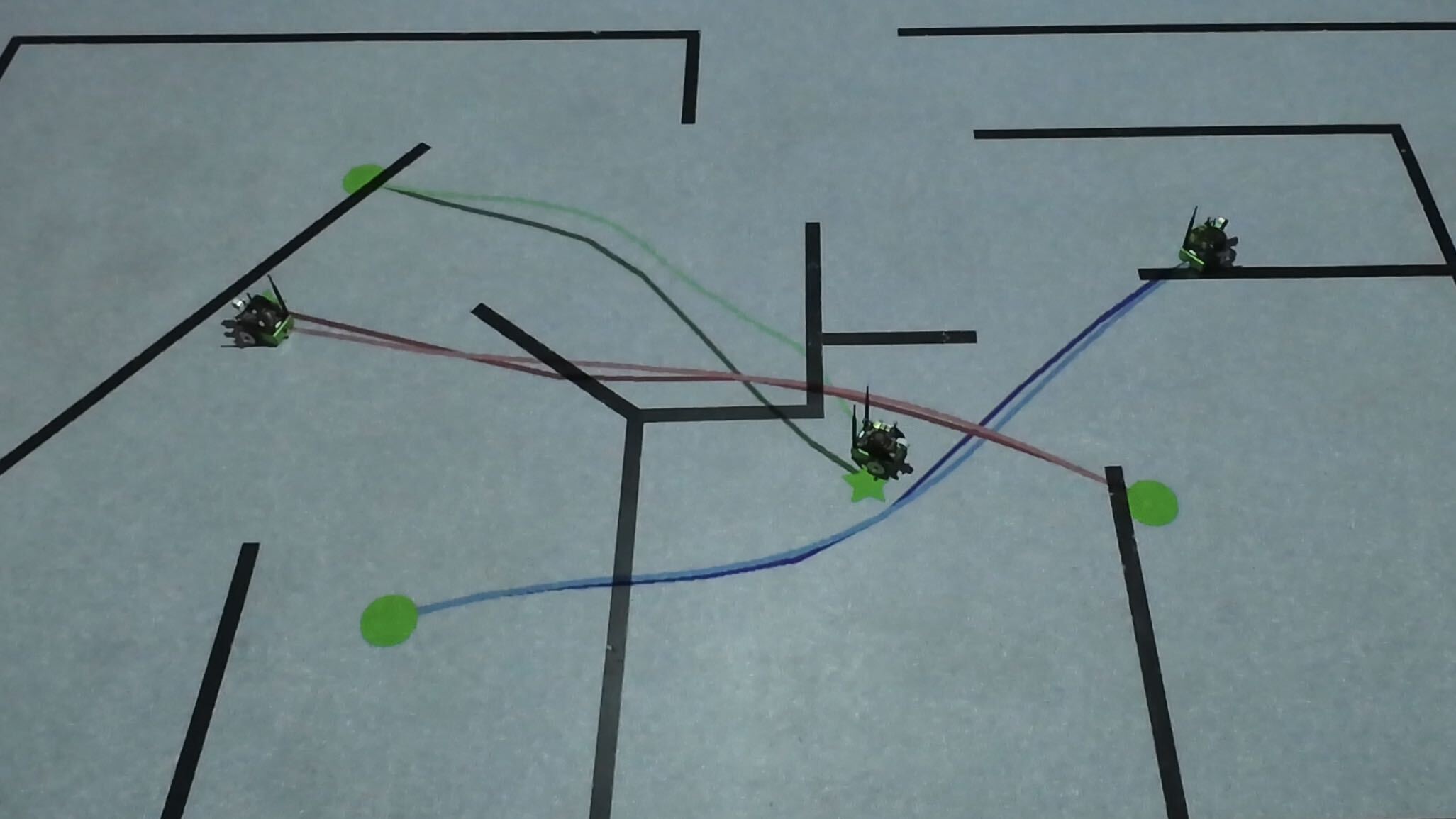}
        \caption{Lab floor (expert)}
    \end{subfigure}
    \hfill
    \begin{subfigure}[b]{0.49\linewidth}
        \centering
        \includegraphics[width=\linewidth]{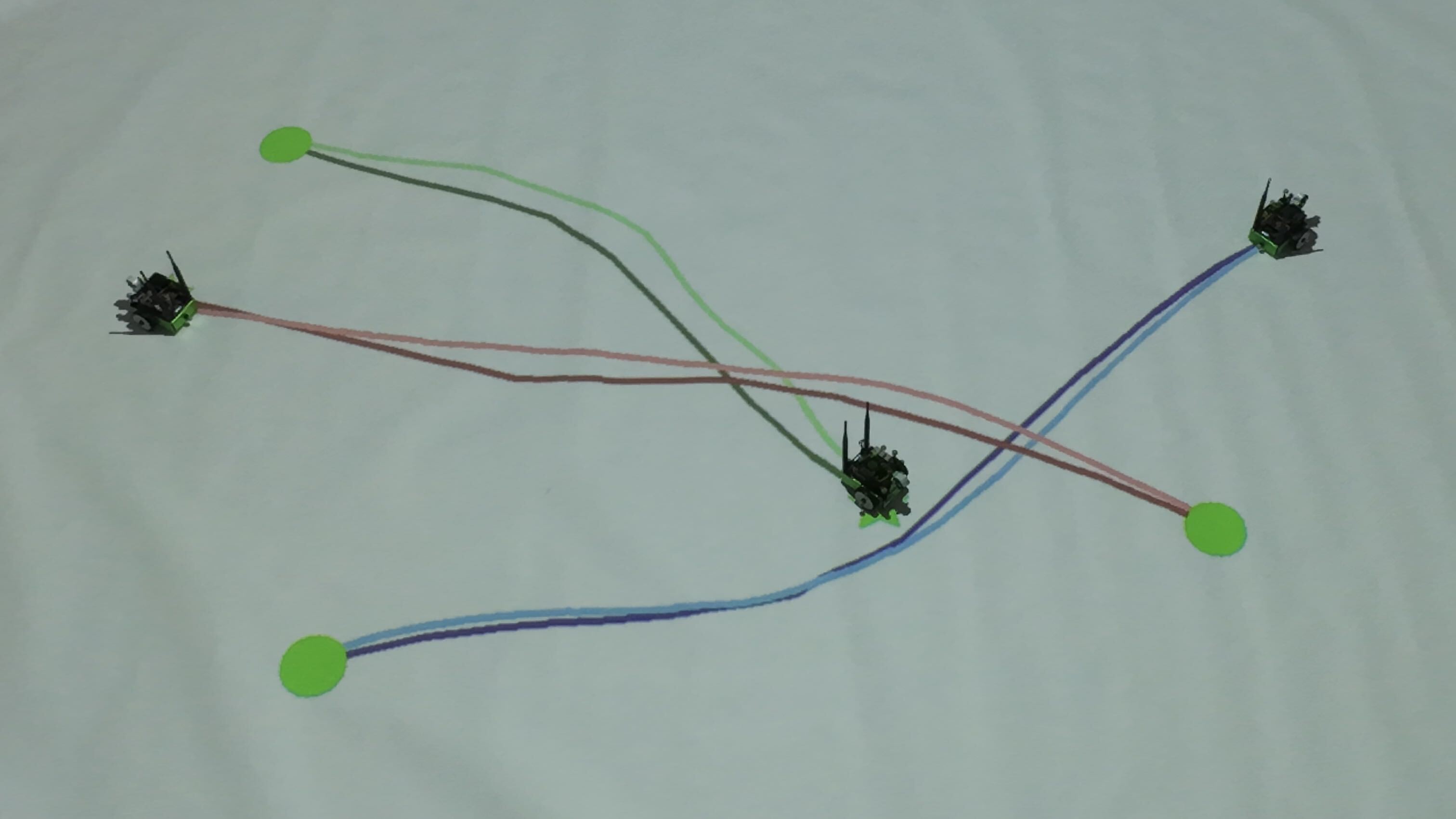}
        \caption{Paper}
    \end{subfigure}
    \\[1ex]
    \begin{subfigure}[b]{0.49\linewidth}
        \centering
        \includegraphics[width=\linewidth]{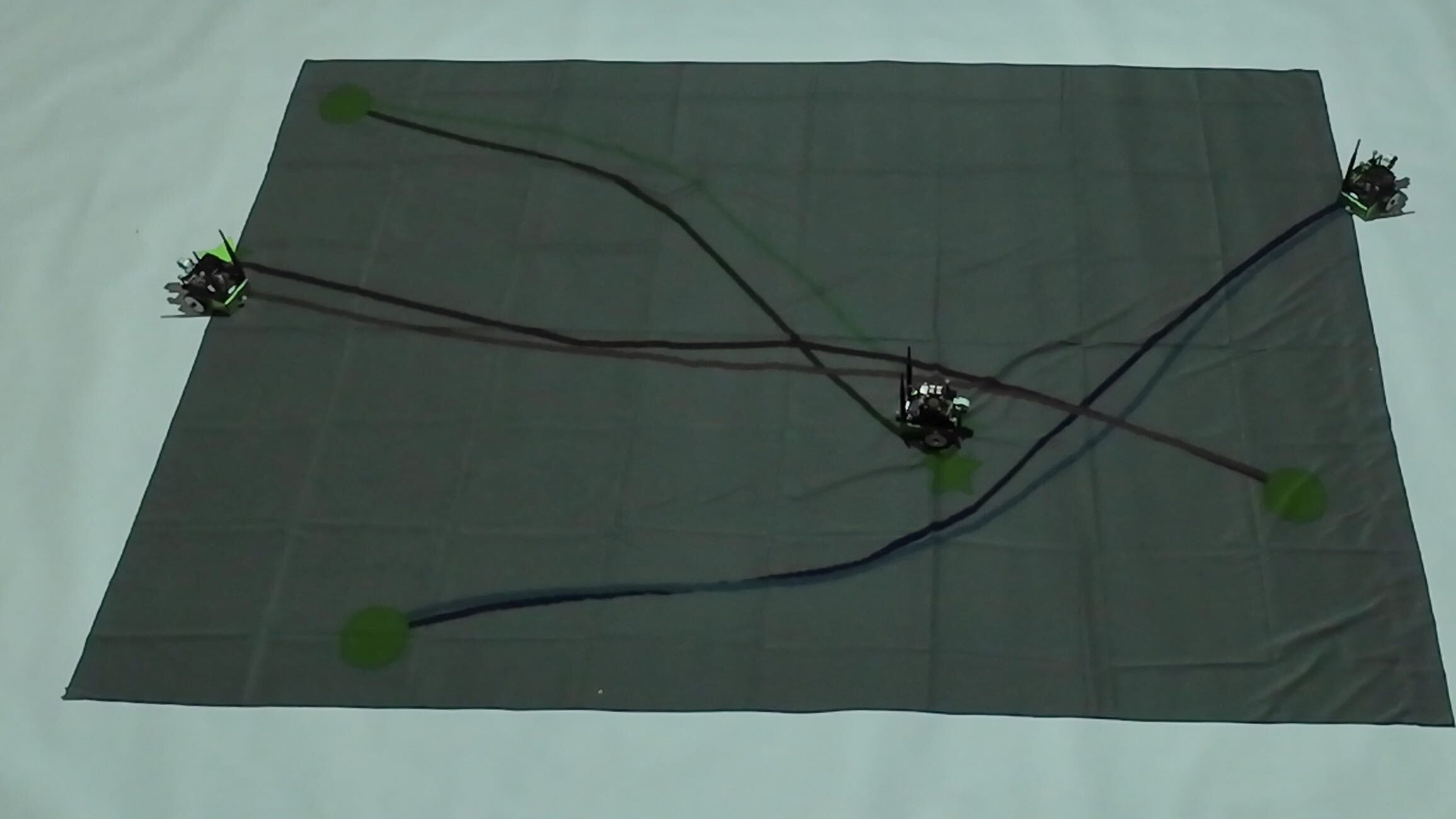}
        \caption{Fabric}
    \end{subfigure}
    \hfill
    \begin{subfigure}[b]{0.49\linewidth}
        \centering
        \includegraphics[width=\linewidth]{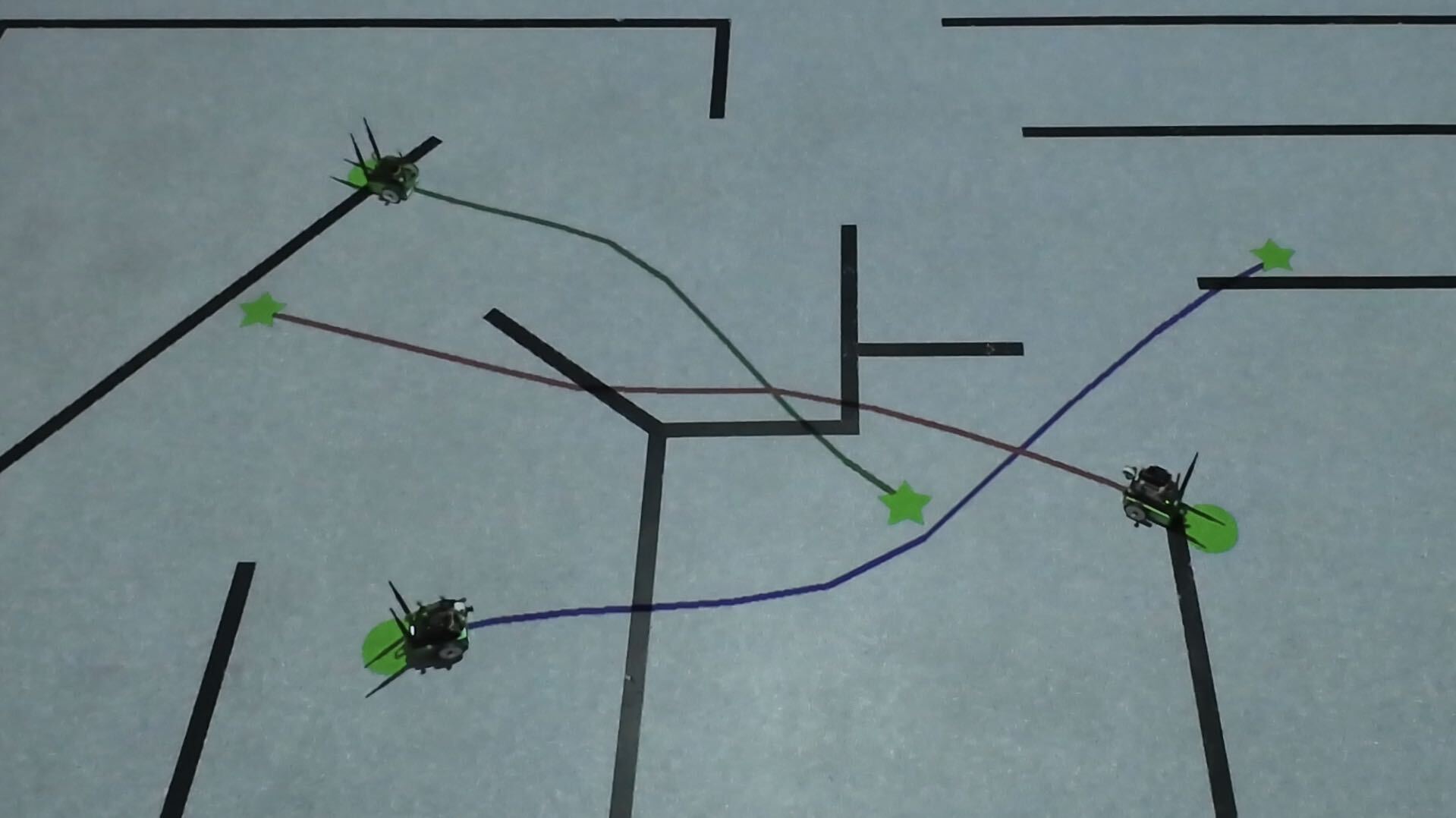}
        \caption{Wet floor}
    \end{subfigure} 
    \caption{\textit{Example trajectories of the three JetBots on four surfaces. For each robot, the dark-colored (blue, green, or red) line shows the expected trajectory under the expert cost parameters and the surface dynamics. The light-colored line shows the trajectory actually executed on that surface using the identified cost parameters. The identified cost parameters approximately reproduce the expert trajectories on the lab floor, paper, and fabric. On the wet floor, the robots' wheels slipped and failed to move (no light-colored line).}}
    \label{fig:robot-surfaces}
\end{figure}

We formulated an unconstrained LQG game (i.e., problem \eqref{problem:primal} without \eqref{eq:primalAffine}, same as section~\ref{sec:multicar}) where each robot aims to reach its own destination. 
We first computed the Nash equilibrium $(K^{*},\alpha^{*})$ on the expert dynamics $(A^E,B^E)$ corresponding to the lab floor. We then applied Algorithm~\ref{alg:constrainedILQG} with an unconstrained setting to identify the cost parameters $\widehat \theta^{id}$ from $(K^{*},\alpha^{*})$. We computed the Nash policy with the identified cost parameters on four surfaces to control the robots: the original lab floor, a sheet of paper, fabric, and a wet floor.

Fig.~\ref{fig:robot-surfaces} shows, for each surface, actual trajectories of the three robots (light blue, green, and red) controlled by the policy induced by the identified cost parameters and the perturbed dynamics. The expected trajectories under the expert cost parameters and the corresponding surface dynamics are shown in dark blue, green, and red, respectively. The trajectories on the lab floor, paper and fabrics approximately reproduce the trajectories under expert's cost parameters with some deviations. On the wet floor, the robots' wheels slipped and failed to move.

Table~\ref{tab:robot-surfaces} reports, for each surface, the dynamics error bound $\epsilon$ as defined in \eqref{eq:dynamicNeighbour}, and the average cost value perturbation as defined in \eqref{eq:accumCostPerturbation} over 10 runs. Except for the wet floor, the cost value perturbation scales roughly linearly with $\epsilon$, consistent with Theorem~\ref{theorem:transferability}. The identified cost parameters are not transferable to the wet floor, as the dynamics' error is too large.

\begin{table}[t]
\centering
\caption{\textit{Dynamics perturbation $\epsilon$ as defined in \eqref{eq:dynamicNeighbour},  and average cost value perturbation (mean over 10 runs) on each surface, relative to the expert dynamics and cost parameters.}} \vspace{-1mm}
\begin{tabular}{lcc}
\hline
\textbf{Surface} & $\epsilon$ & Avg. cost value perturbation  \\
\hline
\textbf{Lab floor (expert)} & 0.0 & 0.77 \\
\textbf{Paper} & 0.195 & 2.46  \\
\textbf{Fabric} & 0.618 & 6.13 \\
\textbf{Wet floor} & 2.81 & 42.25 \\
\hline
\end{tabular} \vspace{-3mm}
\label{tab:robot-surfaces}
\end{table}

\section{Conclusion}
We studied finite-horizon inverse linear quadratic Gaussian game in this work. Given a generalized Nash equilibrium policy in a constrained setting, the cost parameters and optimal dual values can be identified through tractable quadratic programs. In an unconstrained setting, under small errors in the identified cost parameters and in a sufficiently close neighbourhood of the expert dynamics, the Nash policies and cost values remain close to those under the expert cost parameters. Through numerical simulations, we showed that the proposed method recovers the trajectories corresponding to the generalized Nash equilibrium in a constrained setting. Real-robot experiments further demonstrated the transferability of the identified cost parameters to sufficiently close dynamics in the unconstrained setting. Overall, these results indicate that the cost parameters governing multi-agent interactions can be recovered from observed equilibrium policy or finite demonstrations. The identified cost parameters can be used to predict or design policies for similar systems, with the performance degrading linearly with the deviation of the dynamics and cost parameters. Future work will investigate the identifiability of cost parameters and dual values, and extend the transferability analysis to constrained dynamic games.

\bibliographystyle{IEEEtran}
\bibliography{autosam}

\appendices

\section{Proof of Lemma~\ref{lemma:policyBound}} \label{appendix:lemma1Proof}
By rearranging and stacking all agents' feedback gain and bias of the linear quadratic Gaussian (LQG) game, we get
{\small
\begin{align} \label{eq:Keq}
    \setlength{\arraycolsep}{1pt}
    \begin{bmatrix}
        K^{1*}_t \\ \vdots \\ K^{N*}_t
    \end{bmatrix} &= \Phi_t^{-1} \begin{bmatrix} (B_{t}^{1})^\top & \ldots & 0\\
        \vdots &  \ddots & \vdots\\  0 & \ldots & (B_{t}^{N})^\top 
    \end{bmatrix} \begin{bmatrix}
        P^{1*}_{t+1}A_t \\ \vdots \\ P^{N*}_{t+1}A_t 
    \end{bmatrix}, \\
    \begin{bmatrix}
        \alpha^{1*}_t \\ \vdots \\ \alpha^{N*}_t
    \end{bmatrix} &= \Phi_t^{-1}\begin{bmatrix} (B_{t}^{1})^\top & \ldots & 0\\
        \vdots &  \ddots & \vdots\\  0 & \ldots & (B_{t}^{N})^\top 
    \end{bmatrix} \begin{bmatrix}
        \zeta^{1*}_{t+1} \\ \vdots \\ \zeta^{N*}_{t+1}  \label{eq:alphaeq}
    \end{bmatrix},
\end{align} }

\noindent
where \(\Phi_t\) is a block matrix defined in Appendix~\ref{appendix:params}. The Riccati parameters $P_t^{i*}$ and $\zeta_t^{i*}$ are defined in \eqref{recatti:P} and \eqref{recatti:zeta}, respectively. \vspace{-2mm}

\subsection{Preliminary parameters' bounds.} 

\textit{(Bounding $\|\Phi_t\|$ and $\|\Delta\Phi_t\|$ in terms of $\|\Delta P_{t+1}\|$).}
Each block of $\Phi_t$ is either
\begin{align*}
\Phi_t^{ii}=R_t^i+(B_t^i)^\top P_{t+1}^{i*}B_t^i,
\qquad
\Phi_t^{ij}=(B_t^i)^\top P_{t+1}^{i*}B_t^j.
\end{align*}
Thus, using $\|\Delta R_t^i\|\le\epsilon$ and $\|B_t^i\|\le\xi$, and define $\|\Delta P_{t+1}\|:=\max_{i\in[N]}\|\Delta P_{t+1}^i\|$ and $ \|\Delta\zeta_{t+1}\|:=\max_i\|\Delta\zeta_{t+1}^i\|$, we get
\begin{align*}
\|\Phi_t^{ii}\|
&\le
\|R_t^i\| + \|(B_t^i)^\top P_{t+1}^{i*}B_t^i\|
\le
\|R_t^i\| + \|B_t^i\|^2\,\|P_{t+1}^{i*}\|
\\&\le
\xi + \xi^2\,\|P_{t+1}\|,
\end{align*}
\begin{align*}
\|\Phi_t^{ij}\|
\le
\|(B_t^i)^\top P_{t+1}^{i*}B_t^j\|
\le
\|B_t^i\|\,\|P_{t+1}^{i*}\|\,\|B_t^j\|
\le
\xi^2\,\|P_{t+1}\|.
\end{align*}
Similarly,
\begin{align*}
\|\Delta\Phi_t^{ii}\|
\le
\|\Delta R_t^i\|+\|(B_t^i)^\top(\Delta P_{t+1}^i)B_t^i\|
\le
\epsilon+\xi^2\|\Delta P_{t+1}\|,
\end{align*}
\begin{align*}
\|\Delta\Phi_t^{ij}\|
=
\|(B_t^i)^\top(\Delta P_{t+1}^i)B_t^j\|
\le
\xi^2\|\Delta P_{t+1}\|.
\end{align*}
Summing up the blocks, we get
\begin{equation*}
\|\Phi_t\|
\le
\cO\!\Big(\sqrt{N} \xi + \sqrt{N}\xi^2\|P_{t+1}\|\Big).
\end{equation*}
\begin{equation*}
\|\Delta\Phi_t\|
=
\cO\!\Big(\sqrt{N}\epsilon + \sqrt{N}\xi^2\|\Delta P_{t+1}\|\Big).
\end{equation*}

\textit{(Bounding \(\Delta K_t\) and \(\Delta \alpha_t\) in terms of \(\|\Delta P_{t+1}\|\) and \(\|\Delta\zeta_{t+1}\|\)).}  Let us denote $D_t$ the stacked vector with each block being $(B_t^i)^\top P_{t+1}^{i*}A_t$. Hence
\begin{align*}
\|D_t\|
=
\cO\!\big(\sqrt{N}\|B_t\|\|A_t\|\|P_{t+1}\|\big)
=
\cO\!\big(\sqrt{N}\xi^2\|P_{t+1}\|\big).
\end{align*}
\begin{align*}
\|\Delta D_t\|
=
\cO\!\big(\sqrt{N}\|B_t\|\|A_t\|\|\Delta P_{t+1}\|\big)
=
\cO\!\big(\sqrt{N}\xi^2\|\Delta P_{t+1}\|\big).
\end{align*}

Using \(\Delta(\Phi^{-1}) = -\Phi^{-1}(\Delta\Phi)\hat\Phi^{-1}\), we get \(\Delta(\Phi^{-1}) = \Delta\Phi = \cO(\epsilon)\) when $\epsilon$ is arbitrarily small \cite[Lemma 4]{ren2025identify}, we get
\begin{align} 
\|\Delta K_t\|
& \;\le\;
\|\Delta(\Phi_t^{-1})\|\,\|D_t\| + \|\hat\Phi_t^{-1}\|\,\|\Delta D_t\| \notag\\
& = \cO\!\Big(
\sqrt{N}\xi^5\,\epsilon + \sqrt{N}\xi^3\,\|\Delta P_{t+1}\|
\Big). \label{eq:DeltaK-dominant}
\end{align}

Similarly for $\alpha_t$, we get
\begin{equation}\label{eq:Da-dominant}
\|\Delta\alpha_t\|
=
\cO\!\Big(
\sqrt{N}\xi^4\,\epsilon
+
\sqrt{N}\xi^2\,\|\Delta P_{t+1}\|
+
\xi^2\,\|\Delta\zeta_{t+1}\|
\Big),
\end{equation}

Since $F_t = A_t - \sum_{j=1}^N B_t^j K_t^{j*}$,
\begin{align}\label{eq:F-bound}
\|F_t\|
&\le
\|A_t\| + {\big\|[B_t^1\ \cdots\ B_t^N]\big\|\,\|K_t^*\|}
=
\cO(\sqrt{N}\xi^2).
\end{align}
\begin{align*}
\|\Delta F_t\|
&\le
{\big\|[B_t^1\ \cdots\ B_t^N]\big\|\,\|\Delta K_t\|}
=
\cO\!\Big(
\sqrt{N}\xi\,\|\Delta K_t\|
\Big) \\
& = \cO\!\Big(
N\left(\xi^6\,\epsilon + \xi^4\,\|\Delta P_{t+1}\|\right)
\Big).
\end{align*}

\subsection{Recursion on bounding \texorpdfstring{$\|P_t^i\|$}{} and \texorpdfstring{$\|\Delta P_t^i\|$}{}}

\textit{(Base case $t=T$).} From $P_T^{i*}=Q_T^i$, we have $\|P_T^i\| = \|Q_T^i\|\le \xi$, \(
\|\Delta P_T^i\| = \|\Delta Q_T^i\|\le \epsilon.
\) 

\textit{(Recursion on $\|P_t^i\|$).} From \eqref{recatti:P}, 
\begin{align*}
\|P_t^i\|
&\le
\|F_t\|^2 \|P_{t+1}^i\|
+
\|K_t^{i*}\|^2 \|R_t^i\|
+
\|Q_t^i\| \\
&\le
\cO(N\xi^4)\|P_{t+1}^i\|+\cO(\xi^3).
\end{align*}
The second inequality holds due to \eqref{eq:uniformNormB} and \eqref{eq:F-bound}. Unrolling from $T$ with $\|P_T^i\|\le\xi$ gives
\begin{equation} \label{eq:Pt-norm-bound}
\|P_t^i\|=\cO\!\big(\xi (N\xi^4)^{T-t}\big).
\end{equation}

\textit{(Recursion on $\|\Delta P_t^i\|$).}
Taking differences in \eqref{recatti:P},
{\small \begin{align*}
\|\Delta P_t^i\|
&\le \cO(\epsilon) +
\|F_t\|^2 \|\Delta P_{t+1}^i\|
+
\cO\!\big(\|F_t\|\,\|P_{t+1}^i\|\,\|\Delta F_t\|\big) \\
& \quad +
\cO\!\big(\|K_t^{i*}\|\,\|R_t^i\|\,\|\Delta K_t^i\|\big)
+
\cO\!\big(\|K_t^{i*}\|^2\|\Delta R_t^i\|\big).
\end{align*}}
Using \eqref{eq:DeltaK-dominant}-\eqref{eq:Pt-norm-bound}, we obtain
{\small\begin{align*}
  \|\Delta P_t^i\|
&\le
\cO\!\big(N\xi^4\|\Delta P_{t+1}\|\big)
+
\cO\!\big(N^{3/2}\xi^6 \|P_{t+1}\|\|\Delta P_{t+1}\|\big)\\
& \qquad
+
\cO\!\big(N^{3/2}\xi^8 \|P_{t+1}\|\epsilon\big).  \\
& = \cO\!\Big(
N^{3/2}\xi^6 \|P_{t+1}\|\,\|\Delta P_{t+1}\|
+
N^{3/2}\xi^8 \|P_{t+1}\|\epsilon
\Big).
\end{align*}}
Substituting \eqref{eq:Pt-norm-bound} yields
\begin{align*}
\|\Delta P_t^i\|
=
\cO\!\Big(&
N^{3/2}\xi^7 (N\xi^4)^{T-t-1}\|\Delta P_{t+1}\| \\&
+
N^{3/2}\xi^{9}(N\xi^4)^{T-t-1}\epsilon
\Big).
\end{align*}
Unrolling from $T$ with $\|\Delta P_T^i\|\le\epsilon$ gives
\begin{equation*}
\|\Delta P_t\|
=
\cO\!\Big(
\epsilon\,
(N^{3/2}\xi^9)^{T-t}\,
(N\xi^4)^{\frac{(T-t-1)(T-t)}{2}}
\Big).
\end{equation*}

\subsection{Recursion on \texorpdfstring{$\|\zeta_t^i\|$}{}  and \texorpdfstring{$\|\Delta\zeta_t^i\|$}{}.}\vspace{4mm}

\noindent
\textit{(Base case $t=T$).}
From $\zeta_T^{i*}=l_T^i$, \(\|\Delta \zeta_T^i\|=\|\Delta l_T^i\|\le \epsilon.\)
\vspace{4mm}

\noindent
\textit{(Recursion on $\|\zeta_t^i\|$).}
From \eqref{recatti:zeta}, and taking norms of the summation term $\big\|\sum_{j=1}^N B_t^j\alpha_t^{j*}\big\|\le \sqrt{N}\|B_t\|\,\|\alpha_t\|$,
\begin{align*}
\|\zeta_t^i\|
&\le
\|F_t\|\,\|\zeta_{t+1}^i\|
+
\|F_t\|\,\|P_{t+1}^i\|\,\sqrt{N}\|B_t\|\,\|\alpha_t\|\\
& \qquad+
\|K_t^i\|\,\|R_t^i\|\,\|\alpha_t^i\|
+
\|l_t^i\|.
\end{align*}
Using \eqref{eq:uniformNormB}, \eqref{eq:F-bound} and \eqref{eq:Pt-norm-bound}, we obtain
\begin{align*}
\|\zeta_t\|
&\le
\cO(\sqrt{N}\xi^2)\,\|\zeta_{t+1}\|
+
\cO\!\big(N\xi^4\|P_{t+1}\|\big)
+
\cO(\xi^3) \\
&=
\cO(\sqrt{N}\xi^2)\,\|\zeta_{t+1}\|
+
\cO\!\big(\xi^5 (N\xi^4)^{T-t-1}\big),
\end{align*}
Unrolling from $\|\zeta_T\|=\|l_T\|\le\xi$ yields
\begin{equation}
\label{eq:zetat-norm-bound}
\|\zeta_t\|
=
\cO\!\Big(
\xi\,(N\xi^4)^{T-t}
\Big).
\end{equation}

\textit{(Recursion on $\|\Delta\zeta_t^i\|$).}
Take differences in \eqref{recatti:zeta} and bound term-by-term:
{\small\begin{align*}
\|\Delta\zeta_t^i\|
&\le
\|F_t\|\,\|\Delta\zeta_{t+1}^i\|
+
\|\Delta F_t\|\,\|\zeta_{t+1}\|
\\
&\quad+
\|F_t\|\,\|P_{t+1}\|\,\sqrt{N}\|B_t\|\,\|\Delta\alpha_t\| \\
&\quad
+
\|F_t\|\,\sqrt{N}\|B_t\|\,\|\alpha_t\|\,\|\Delta P_{t+1}\|\\
&\quad
+
\|\Delta F_t\|\,\|P_{t+1}\|\,\sqrt{N}\|B_t\|\,\|\alpha_t\|
\\
&\quad+
\|K_t\|\,\|R_t\|\,\|\Delta\alpha_t\|
+
\|K_t\|\,\|\Delta R_t\|\,\|\alpha_t\|\\
&\quad
+
\|\Delta K_t\|\,\|R_t\|\,\|\alpha_t\|
+
\|\Delta l_t\|.
\end{align*}}
Using \eqref{eq:costParamBound},\eqref{eq:uniformNormB}, \eqref{eq:DeltaK-dominant}-\eqref{eq:zetat-norm-bound}, we get
\begin{align*}
\|\Delta\zeta_t\|
=
\cO\!\Big(
&N\xi^6\|P_{t+1}\|\,\|\Delta\zeta_{t+1}\|
+
N\xi^4\|P_{t+1}\|\,\|\Delta P_{t+1}\|
\\&+
N\xi^6\|P_{t+1}\|\,\epsilon
+
N\xi^4\,\|\zeta_{t+1}\|\,\|\Delta P_{t+1}\|\\&
+
N\xi^6\,\|\zeta_{t+1}\|\,\epsilon
\Big).
\end{align*}
Substituting \eqref{eq:Pt-norm-bound}-\eqref{eq:zetat-norm-bound}, we get
\begin{align*}
\|\Delta\zeta_t\|
=
\cO\!\Big(&
N\xi^7 (N\xi^4)^{T-t-1}\,\|\Delta\zeta_{t+1}\|
+
N\xi^7 (N\xi^4)^{T-t-1}\,\epsilon \\& +
N\xi^5 (N\xi^4)^{T-t-1}\,\|\Delta P_{t+1}\|
\Big).
\end{align*}
As $\|\Delta P_{t+1}\|=\cO(\epsilon)$ along the backward recursion, hence we get
\begin{align*}
\|\Delta\zeta_t\|
=
\cO\!\Big(
N\xi^7 (N\xi^4)^{T-t-1}\,\|\Delta\zeta_{t+1}\|\Big).
\end{align*}
Unrolling from $T$ with $\|\Delta\zeta_T\|\le\epsilon$ gives
\begin{equation}
\|\Delta \zeta_t\|
=
\cO\!\Big(
\epsilon\,
(N\xi^7)^{T-t}\,
(N\xi^4)^{\frac{(T-t-1)(T-t)}{2}}
\Big).
\end{equation}

\noindent
\textit{(Bounding $\|\Delta K_t\|$ and $\|\Delta\alpha_t\|$).}
Using \eqref{eq:DeltaK-dominant},
\begin{align*}
\|\Delta K_t\|
&=\cO\!\big(\sqrt{N}\xi^3\,\|\Delta P_{t+1}\|\big)\\
&=\cO\!\Big(
\sqrt{N}\xi^3\,
\,
(N^{3/2}\xi^9)^{T-t-1}\,
(N\xi^4)^{\frac{(T-t-2)(T-t-1)}{2}} \epsilon
\Big).
\end{align*}
Using \eqref{eq:Da-dominant}, we get
\begin{align*}
\|\Delta\alpha_t\|
&=
\cO\!\big(\sqrt{N}\xi^2\,\|\Delta P_{t+1}\|+\xi^2\|\Delta\zeta_{t+1}\|\big)\\& =
\cO\!\big(\sqrt{N}\xi^2\,
(N\xi^9)^{T-t-1}\,
(N\xi^4)^{\frac{(T-t-2)(T-t-1)}{2}} \epsilon\big).
\end{align*}
Conservatively simplifying their exponents gives
\[
\|\Delta K_t\|
=
\cO\!\left(
N^{(T-t)^2}
\xi^{2(T-t+1)^2}
\epsilon
\right),
\]
\[
\|\Delta\alpha_t\|
=
\cO\!\left(
N^{\frac{(T-t)^2}{2}}
\xi^{2(T-t+1)^2}
\epsilon
\right).
\]
This yields the two bounds stated in Lemma~\ref{lemma:policyBound}. \(\hfill\square\)

\section{Proof of Lemma~\ref{lemma:policyBoundDyn}}
\label{appendix:lemma2Proof}

 We compare two games with the same cost parameter \(\theta\), but with the true dynamics \((A^E,B^E)\) and a neighbouring dynamics \((A',B')\). Let \((K_t^{i,E},\alpha_t^{i,E},P_t^{i,E},\zeta_t^{i,E})\) denote the Riccati quantities and Nash policy under \((A^E,B^E)\), and let \((K_t^{i\prime},\alpha_t^{i\prime},P_t^{i\prime},\zeta_t^{i\prime})\) denote the ones under \((A',B')\). In this section, we use \(\Delta(.)\) to denote the difference of parameters induced by \((A^E,B^E)\) and \((A',B')\).

Similar to the proof of Lemma~\ref{lemma:policyBound}, we first upper-bound the perturbation of $\|\Phi_t\|$ from the dynamics' error in terms of \(\Delta P_{t+1}\). 
Using \(\|\Delta B_t^i\|\le\epsilon\), we have
\begin{equation}\label{eq:DPhi-dyn-only}
\|\Delta\Phi_t\|=
\cO\!\left(
\sqrt N\,\xi\,\|P_{t+1}^{E}\|\epsilon
+\sqrt N\,\xi^2\|\Delta P_{t+1}\|
\right).
\end{equation}
Let \(D_t\) denote the stacked right-hand side in \eqref{eq:Keq}, whose \(i\)-th block is \((B_t^{i,E})^\top P_{t+1}^{i,E}A_t^E\) under the true dynamics and \((B_t^{i\prime})^\top P_{t+1}^{i\prime}A'_t\) under the perturbed dynamics. Using product-difference expansion, we have
\begin{equation}\label{eq:DD-dyn-only}
\|\Delta D_t\|=
\cO\!\left(
\sqrt N\,\xi\,\|P_{t+1}^{E}\|\epsilon
+\sqrt N\,\xi^2\|\Delta P_{t+1}\|
\right),
\end{equation}
Since \(\Delta(\Phi_t^{-1})=-\Phi_t^{-1}(\Delta\Phi_t)(\Phi_t')^{-1}\), and from \eqref{eq:DPhi-dyn-only}--\eqref{eq:DD-dyn-only}, we have
\begin{equation}\label{eq:DK-dyn-only}
\|K_t'-K_t^E\|
=\cO\!\left(
\sqrt N\,\xi^3\|P_{t+1}^{E}\|\epsilon
+\sqrt N\,\xi^3\|\Delta P_{t+1}\|
\right).
\end{equation}
Similarly, from \eqref{eq:alphaeq},
\begin{align}
\|\alpha_t'-\alpha_t^E\|
&=\cO\!\Big(
\sqrt N\,\xi^2
\big(\|P_{t+1}^{E}\|
+\|\zeta_{t+1}^{E}\|\big)\epsilon \notag\\
&\quad
+\sqrt N\,\xi^2\|\Delta P_{t+1}\|
+\xi^2\|\Delta\zeta_{t+1}\|
\Big).
\label{eq:Dalpha-dyn-only}
\end{align}
We now propagate the Riccati perturbations backward. Since the terminal costs are the same,
\(
\Delta P_T^i=0,\;\;
\Delta\zeta_T^i=0.
\) The nominal bounds used in Lemma~\ref{lemma:policyBound} remain valid here:
\[
\|P_t^{E}\|
=
\cO\!\big(\xi (N\xi^4)^{T-t}\big),
\qquad
\|\zeta_t^{E}\|
=
\cO\!\big(\xi (N\xi^4)^{T-t}\big).
\]
For the closed-loop dynamics, we have
\begin{align*}
\Delta F_t
&=\Delta A_t
-\sum_{j=1}^N \Delta B_t^jK_t^{j\prime}
-\sum_{j=1}^N B_t^{j,E}\!\big(K_t^{j\prime}-K_t^{j,E}\big),
\end{align*}
and therefore, by \eqref{eq:DK-dyn-only},
\begin{equation}\label{eq:DF-dyn-only}
\|\Delta F_t\|=
\cO\!\left(
N^{T-t+\frac12}\xi^{4(T-t)+1}\epsilon
+N^{\frac32}\xi^4\|\Delta P_{t+1}\|
\right).
\end{equation}
Taking differences in \eqref{recatti:P} and applying \eqref{eq:DK-dyn-only} and \eqref{eq:DF-dyn-only} gives
\begin{align*}
\|\Delta P_t\|
&=
\cO\!\Big(
N^{2(T-t)}\xi^{8(T-t)}\epsilon
+N^{T-t+1}\xi^{4(T-t)+3}\|\Delta P_{t+1}\|
\Big).
\end{align*}
Since \(\Delta P_T=0\), finite-horizon backward induction yields 
\begin{equation}\label{eq:DP-dyn-only}
\|\Delta P_t\|
=
\cO\!\Big(
\epsilon\,
N^{\frac{(T-t+2)^2}{2}}\,
\xi^{2(T-t+2)^2}
\Big).
\end{equation}
Taking differences in \eqref{recatti:zeta} and applying \eqref{eq:DK-dyn-only} - \eqref{eq:DP-dyn-only} yields
\begin{align*}
\|\Delta\zeta_t\|
&=
\cO\!\Big(
N^{2(T-t)}\xi^{8(T-t)}\epsilon
+N^{T-t+1}\xi^{4(T-t)+3}\|\Delta P_{t+1}\|
\\
&\qquad\qquad
+N^{T-t+1}\xi^{4(T-t)+3}\|\Delta\zeta_{t+1}\|
\Big).
\end{align*}
Since \(\Delta\zeta_T=0\), another backward recursion gives
\begin{equation}\label{eq:Dzeta-dyn-only}
\|\Delta\zeta_t\|
=
\cO\!\Big(
\epsilon\,
N^{\frac{(T-t+2)^2}{2}}\,
\xi^{2(T-t+2)^2}
\Big).
\end{equation}
Substituting \eqref{eq:DP-dyn-only} and \eqref{eq:Dzeta-dyn-only} into \eqref{eq:DK-dyn-only} and \eqref{eq:Dalpha-dyn-only}, and overapplying the bounds by completing the square gives
\begin{align*}
\|K_t'-K_t^E\|
&=
\cO\!\Big(
N^{\frac{(T-t+2)^2}{2}}\,
\xi^{2(T-t+2)^2}\epsilon
\Big),\\
\|\alpha_t'-\alpha_t^E\|
&=
\cO\!\Big(
N^{\frac{(T-t+2)^2}{2}}\,
\xi^{2(T-t+2)^2}\epsilon
\Big),
\end{align*}
This yields the two bounds stated in Lemma~\ref{lemma:policyBoundDyn}. \(\hfill\square\)

\section{Derivation of the transferability cost bound} \label{sec:cost-bound-derivation}
In this appendix, write the stacked mean trajectories as \(\vx=F\mu_0+Z\bm{\alpha}\) and \(\vu^i=-K^i\vx-\bm{\alpha}^i\). Under \eqref{eq:uniformNormB}, we have
\begin{align}
\big|\widehat{\cJ}^i-\cJ^i\big|
&\le
\big| \hat \vx^\top {\cQ}^i \hat \vx - \vx^\top \cQ^i \vx \big|
+
\big| {\cL}^{i\top}\hat \vx - \cL^{i\top}\vx \big| \notag \\
&\qquad +
\big| (\hat \vu^i)^\top {\cR}^i \hat \vu^i - (\vu^i)^\top \cR^i \vu^i \big|
+\mathcal E_{\mathrm{trace}}^i.
\label{eq:cost-split}
\end{align}
where \(
\mathcal E_{\mathrm{trace}}^i :=\sum_{t=0}^{T-1}\Big[
|\operatorname{tr}\:(Q_{t+1}^i(\widehat\Sigma_{t+1}-\Sigma_{t+1}))| +|\operatorname{tr}\:(R_t^i(\widehat K_t^i\widehat\Sigma_t\widehat K_t^{i\top}-K_t^i\Sigma_tK_t^{i\top}))|\Big]\) accounts for the trace term associated with initial-state uncertainty and accumulated process noise.

\subsection{Bounding the stacked terms.} 

Let us define the worst-case policy perturbation bound over all time steps,
{\small
\begin{subequations} \label{eq:barDK}
\begin{align}
\bar\Delta K
& := \max_{t\in[T]^-}\|\Delta K_t\|
= \cO\!\Big(N^{(T+2)^2}\xi^{2(T+2)^2}\epsilon \Big), \\
\bar\Delta \alpha
&:= \max_{t\in[T]^-}\|\Delta \alpha_t\|
= \cO\!\Big(N^{\frac{(T+2)^2}{2}}\xi^{2(T+2)^2}\epsilon \Big). 
\end{align}
\end{subequations}
}

\noindent
Let \(\bar F_t:={\max_{t\in[T]^-}\{\|F_t\|\}}=\cO({\sqrt N}\xi^2)\). 
Also from \eqref{eq:uniformNormB}: $\|\cQ^i\|\le \xi$, $\|\cR^i\|\le \xi$, $\|\cL^i\|\le \sqrt{T}\,\xi$. Moreover,
\(
\|\bm{\alpha}\|\le {\sqrt T}\,\xi, \; \|\Delta \bm{\alpha}\|\le \sqrt{T}\,\bar\Delta\alpha,\;\|K^i\|,\le \xi,\; \|\Delta K^i\|\le \bar\Delta K.
\) Hence,
\begin{align*}
\bar\Delta F_t &:= \max_{t\in[T]^-}\|\Delta F_t\| {\le \sqrt{N}\xi\,\bar\Delta K}.
\end{align*}

Each block row of $F$ contains products of $F_t$'s, and each block of $Z$ contains products of $F_t$'s and $B_t$.
Using a conservative product-difference bound yields, for any $m\le T$, $\Big\|\prod_{\tau=0}^{m}\hat F_\tau-\prod_{\tau=0}^{m}F_\tau\Big\|
\le (m+1)\,\bar F_t^{m}\,\bar\Delta F_t.$

Therefore, summing across the $(T+1)$ stacked block rows and using $\bar F_t\le{2\sqrt N\xi^2}$, we obtain
\begin{align*}
\|\Delta F\| &\le \cO\!\Big(T^2\,\bar F_t^{T}\,\bar\Delta F_t\Big), \quad \|\Delta Z\| \le \cO\!\Big(T^3\sqrt N\,\xi\bar F_t^{T}\,\bar\Delta F_t\Big). 
\end{align*}
where for \(\Delta Z\), the product-difference contains a factor \(\|B_t\|\le\sqrt N\xi\) amd each block has one \(\Delta F_t\) contributes at most a \(\bar\Delta F_t\) error term. Also,
\begin{equation}
\|F\|\le \cO(T\,\bar F_t^{T}),\qquad \|Z\|\le \cO(T^2\sqrt N\,\bar F_t^{T}\,\xi). \label{eq:FZnorms}
\end{equation}

\subsection{Bounding \texorpdfstring{$\|\Delta\mathbf{x}\|$}{} and \texorpdfstring{$\|\Delta\mathbf{u}^i\|$}{}.} 

Using \(\Delta\vx=(\Delta F)\mu_0+(\Delta Z)\bm{\alpha}+ Z(\Delta\bm{\alpha})\), 
\begin{align*}
\|\Delta\vx\|
&\le \|\Delta F\|\,\|\mu_0\| + \|\Delta Z\|\,\|\bm{\alpha}\| + \| Z\|\,\|\Delta\bm{\alpha}\| \notag\\
&\le \cO\!\Big( 
T^2\bar F_t^{T}\bar\Delta F_t\,\|\mu_0\|
+ T^2\sqrt N\bar F_t^{T}\xi\cdot \sqrt{T}\bar\Delta\alpha \notag \\
& \qquad\qquad + T^3\sqrt N\xi\bar F_t^{T}\bar\Delta F_t\,{\sqrt T}\xi
\Big). 
\end{align*}
For stacked inputs across all time steps for player $i$, 
\begin{align*}
\|\Delta \vu^i\|
&\le \|\Delta K^i\|\,\|\vx\| + \|K^i\|\,\|\Delta\vx\| + \|\Delta\alpha^i\| \notag\\
&\le \bar\Delta K\,\|\vx\| + \xi\,\|\Delta\vx\| + \sqrt T\bar\Delta\alpha. 
\end{align*}
Moreover, by \eqref{eq:FZnorms} and \(\|\bm{\alpha}\|\le {\sqrt T}\xi\),
\begin{align*}
\|\vx\|
&\le \|F\|\,\|\mu_0\| + \|Z\|\,\|\bm{\alpha}\| \notag \\
&\le \cO\!\Big(T\bar F_t^{T}\xi + T^2\sqrt N\bar F_t^{T}\xi\cdot {\sqrt T}\xi\Big) \notag\\
&= \cO\!\big(T^{5/2}{\sqrt N}\,\bar F_t^{T}\xi^2\big),
\end{align*}
\begin{align}
\|\vu^i\|
&\le
\|K^i\|\,\|\vx\| + \|\bm{\alpha}^i\| \le \xi\,\|\vx\| + \sqrt{T}\,\xi \notag \\
& = \cO\!\big(T^{5/2}{\sqrt N}\,\bar F_t^{T}\xi^3 + \sqrt{T}\xi\big).
\label{eq:ui-bound}
\end{align}

{
\subsection{Bounding the covariance trace terms.}
The covariance recursions are
\(
\Sigma_{t+1}=F_t\Sigma_tF_t^\top+W_t.
\)
Hence,
\(\|\Sigma_t\|\le(t+1)\xi\bar F_t^{2t}\).
Writing \(\Delta\Sigma_t:=\widehat\Sigma_t-\Sigma_t\), we obtain
\begin{align*}
\|\Delta\Sigma_{t+1}\|
&\le\bar F_t^2\|\Delta\Sigma_t\|
+2\bar F_t\|\Sigma_t\|\bar\Delta F_t.
\end{align*}

Unrolling from $t=0$ where \(\|\Delta\Sigma_0\|=0\), we can get
\(
\|\Delta\Sigma_t\| \le t(t+1)\xi\bar F_t^{2t-1}\bar\Delta F_t.
\) Similarly, 
\(
\|\widehat K_t^i \widehat\Sigma_t \widehat K_t^{i\top}-K_t^i\Sigma_tK_t^{i\top}\|
\le\xi^2\|\Delta\Sigma_t\|+2\xi\|\Sigma_t\|\bar\Delta K.
\)
Substituting \(\bar F_t\le{2\sqrt N\xi^2}\), \(\bar\Delta F_t{\le\sqrt N\xi\bar\Delta K}\), and \eqref{eq:barDK}, yields
\begin{align}
\mathcal E_{\mathrm{trace}}^i
&=\cO\!\Big(T^3\bar F_t^{2T}\xi^4
(\bar\Delta F_t+\bar\Delta K)\Big) \notag\\
&=\cO\!\Big(T^3
{4^T N^{T^2+5T+\frac92}}\xi^{2T^2+12T+13}\epsilon\Big).
\label{eq:cov-trace-bound}
\end{align}
}

\subsection{Transferability bound in Theorem~\ref{theorem:transferability}.} 
Based on \eqref{eq:cost-split}, we can further bound the cost deviation terms by
\begin{align*}
\big|\hat\vx^\top \cQ^i \hat\vx - \vx^\top \cQ^i \vx\big|
&\le \|\cQ^i\|\,\|\hat\vx+\vx\|\,\|\Delta\vx\| \\
&\le \xi\,(2\|\vx\|+\|\Delta\vx\|)\,\|\Delta\vx\|.
\end{align*}
\begin{align*}
\big|(\hat\vu^i)^\top \cR^i \hat\vu^i - (\vu^i)^\top \cR^i \vu^i\big|
&\le \|\cR^i\|\,\|\hat\vu^i+\vu^i\|\,\|\Delta\vu^i\|\\
&\le \xi\,(2\|\vu^i\|+\|\Delta\vu^i\|)\,\|\Delta\vu^i\|.
\end{align*}
\begin{align*}
\big|{\cL}^{i\top}\hat \vx - \cL^{i\top}\vx\big|
\le \|\cL^i\|\,\|\Delta\vx\|
\le \sqrt{T}\,\xi\,\|\Delta\vx\|.
\end{align*}
Hence, the overall cost deviation bound can be written as
\begin{align*}
\big|\widehat{\cJ}^i-\cJ^i\big|
&\le
\xi\,(2\|\vx\|+\|\Delta\vx\|)\,\|\Delta\vx\|
+
\sqrt{T}\,\xi\,\|\Delta\vx\| \notag \\
&\quad+ 
\xi\,(2\|\vu^i\|+\|\Delta\vu^i\|)\,\|\Delta\vu^i\|
{+\mathcal E_{\mathrm{trace}}^i}.
\end{align*}
Given that $\|\Delta\vx\| = \cO(\epsilon)$ and $\|\Delta\vu^i\| = \cO(\epsilon)$, we can drop the quadratic $\epsilon$ terms. To bound the linear terms, we note that $\xi\|\vu^i\|\|\Delta\vu^i\|$ carries the higher powers of $\xi$ and $T$, and thus dominates the term $\xi\|\vx\|\|\Delta\vx\|$.  Substituting the bounds for $\|\vx\|$ and $\|\Delta\vx\|$ we get
\begin{align*}
\|\Delta\vu^i\| &\le \bar\Delta K\|\vx\| + \xi\|\Delta\vx\| + \sqrt T\bar\Delta\alpha \\
&= \cO\!\Big( {\sqrt N}\bar F_t^T\xi^3 (T^{5/2}\bar\Delta K + T^{7/2}\bar\Delta F_t + T^{5/2}\bar\Delta\alpha) \Big).
\end{align*}

Multiplying $\|\Delta\vu^i\|$ by $\xi\|\vu^i\| = \cO(T^{5/2}{\sqrt N}\bar F_t^T\xi^4)$, we get
\begin{align*}
\big|\widehat{\cJ}^i-\cJ^i\big| &\le \cO\!\Big( T^6 {N}\bar F_t^{2T} \xi^7 \big(\bar\Delta F_t + \bar\Delta K + \bar\Delta\alpha\big) \Big)
{+\mathcal E_{\mathrm{trace}}^i}.
\end{align*}

Finally, substituting \(\bar\Delta F_t {\le\sqrt N\xi\bar\Delta K}\) and noting that ${\sqrt N}\xi\bar\Delta K$ dominates both \(\bar\Delta K\) and \(\bar\Delta\alpha\) term, we get
\begin{equation*}
\big|\widehat{\cJ}^i-\cJ^i\big|
\le
\cO\!\Big(
T^6{N}\,\bar F_t^{2T}\xi^7
\big({\sqrt N}\xi\,\bar\Delta K + \bar\Delta\alpha\big)
\Big){+\mathcal E_{\mathrm{trace}}^i}.
\end{equation*}
Substituting \eqref{eq:barDK}--\eqref{eq:cov-trace-bound} and collecting the dominant terms, we get
\begin{align*}
\big|\widehat{\cJ}^i-\cJ^i\big|
&\le \cO\!\Big(T^6{ N^{T^2+5T+\frac{11}{2}}}\xi^{2T^2+12T+16}\epsilon\Big).
\end{align*}
This is bounded by
\(\cO\!\Big(\epsilon T^6(N\xi^2)^{\cO(T^2)}\Big),\)
as stated in Theorem~\ref{theorem:transferability}.

\section{Matrices definitions} \label{appendix:params}
In this section, we define the matrices that are used but omitted in the main text.
{\footnotesize
\begin{align*}
    &\Phi_t = \begin{bmatrix} 
    R_{t}^{1} + (B_{t}^{1})^\top P^{1*}_{t+1} B_{t}^{1} & \dots & (B_{t}^{1})^\top P^{1*}_{t+1} B^N_t\\
    (B_{t}^{2})^\top P^{2*}_{t+1} B^1_t & \dots &  (B_{t}^{2})^\top P^{2*}_{t+1} B^N_t\\
    \vdots & \ddots & \vdots \\
    (B_{t}^{N})^\top P^{N*}_{t+1} B^1_t & \dots &  R_{t}^{N} + (B_{t}^{N})^\top P^{N*}_{t+1} B_{t}^{N}
    \end{bmatrix}, \\
     &\Psi_t = \begin{bmatrix}
        -F_t^\top P_{t+1}^{1*} B_t^1 + K_{t}^{1*}R_t^1 & \dots & -F_t^\top P_{t+1}^{1*} B_t^N + K_{t}^{N*}R_t^N\\
        -F_t^\top P_{t+1}^{2*} B_t^1 + K_{t}^{1*}R_t^1 & \dots & -F_t^\top P_{t+1}^{2*} B_t^N + K_{t}^{N*}R_t^N\\
        \vdots & \ddots & \vdots \\
        -F_t^\top P_{t+1}^{N*} B_t^1 + K_{t}^{1*}R_t^1 &\dots & -F_t^\top P_{t+1}^{N*} B_t^N + K_{t}^{N*}R_t^N
    \end{bmatrix}, \\
& \Phi_t  \in \bbR^{Nn_u \times Nn_u}, \quad \Psi_t  \in  \mathbb{R}^{Nn_x \times N n_u},
\end{align*}
}

{\small
\begin{align*}
    &H_t = \;\Phi_t^{-1} \begin{bmatrix} (B_{t}^{1})^\top & 0 & \dots & 0\\
        0 &  (B_{t}^{2})^\top & \cdots & 0 \\
        \vdots & \vdots & \ddots & \vdots \\
        0 & 0 & \cdots & (B_{t}^{N})^\top
    \end{bmatrix} \in  \mathbb{R}^{Nn_u \times N n_x}, \\
    &H = \begin{bmatrix}
         H_{1} & \cdots & 0 \\
         \vdots & \ddots & \vdots \\
         0 & \cdots & H_{T}
    \end{bmatrix} \in  \mathbb{R}^{TNn_u \times TNn_x}, \\
    &C_t = \underbrace{\begin{bmatrix}
        c_t \\ \vdots \\ c_t
    \end{bmatrix}}_{N\text{ blocks}} \in \bbR^{N n_x \times O}, \vc = \begin{bmatrix}
        0^{n_x \times O}  \\
        c_1\\
        \vdots  \\
        c_T
    \end{bmatrix} \in \mathbb{R}^{(T+1)n_x \times O},\\
    &F_t = A_t - \sum_{i=1}^N B_t^i K_t^{i*} \in \mathbb{R}^{n_x \times n_x}, \\
    &F = \begin{bmatrix} I, &
        F_0, &
        F_1F_0, &
        \dots, &
        \prod_{t=T-1}^{0} F_t
    \end{bmatrix}^\top \in  \mathbb{R}^{(T+1)n_x \times n_x}, \\
    &B_t = \begin{bmatrix}
        B_t^1 & B_t^2 & \cdots & B_t^N
    \end{bmatrix} \in \mathbb{R}^{n_x \times N n_u}, \\
    &Z = \begin{bmatrix}
        0 & 0  & \cdots & 0\\
        -B_0 & 0 & \cdots & 0\\
        -F_1 B_0 & -B_1 & \cdots & 0\\
        -F_2 F_1 B_0  & -F_2 B_1 & \cdots & 0\\
        \vdots & \vdots & \ddots & \vdots \\
        -\prod_{t=T-1}^{1} F_t \cdot B_0 & -\prod_{t=T-1}^{2} F_t \cdot B_1 & \cdots & -B_{T-1}
    \end{bmatrix}, \\
    &Z \in \mathbb{R}^{(T+1) n_x \times T N n_u}, \\
    &G_T = C_T \in \bbR^{N n_x \times O},\\
    &G_t =  \operatorname{diag}(\underbrace{F_t,\ldots,F_t}_{N})  G_{t+1} + \Psi_t H_t G_{t+1} + C_t \in \bbR^{N n_x \times O}, \\ 
    &G = \begin{bmatrix} G_1 & \dots & G_T
    \end{bmatrix}^\top \in  \mathbb{R}^{TNn_x \times O}, \\
    &\cQ^i = \begin{bmatrix}
        0 & 0 & \cdots & 0 \\
        0 & Q_1^i & \cdots & 0 \\
        \vdots & \vdots & \ddots & \vdots \\
        0 & 0 & \cdots & Q_T^i
    \end{bmatrix} \in \mathbb{R}^{(T+1)n_x \times (T+1)n_x}, \\
    &\cR^i = \begin{bmatrix}
        R_0^i & 0 & \cdots & 0 \\
        0 & R_1^i & \cdots & 0 \\
        \vdots & \vdots & \ddots & \vdots \\
        0 & 0 & \cdots & R_{T-1}^i
    \end{bmatrix} \in \mathbb{R}^{Tn_u \times Tn_u}. 
\end{align*}}

\end{document}